\newcommand{\labitem}[2]{%
\def\@itemlabel{\textbf{#1}}
\item
\def\@currentlabel{#1}\label{#2}}
\title{A generalized Sitnikov problem                                                                                                                                                                                                                                                                                                                                                                            }
\author{Gast\'on Beltritti   \and Fernando Mazzone   \and Martina Oviedo  }
\institute{G. Beltritti \at CONICET - Dpto. de Matem\'atica, Facultad de Ciencias Exactas Físico-Químicas y Naturales.
Universidad Nacional de R\'{i}o Cuarto
(5800) R\'{\i}o Cuarto, C\'ordoba, Argentina\\
\email{gbeltritti@exa.unrc.edu.ar}\\
F. Mazzone \at CONICET - Dpto. de Matem\'atica, Facultad de Ciencias Exactas Físico-Químicas y Naturales.
Universidad Nacional de R\'{i}o Cuarto
(5800) R\'{\i}o Cuarto, C\'ordoba, Argentina\\
\email{fmazzone@exa.unrc.edu.ar}\\
M. Oviedo \at
 CONICET - Instituto de Investigaciones Matem\'aticas ``Luis A. Santal\'o''.
 Facultad de Ciencias Exactas y Naturales-UBA.
 (C1428EGA) – C.A.B.A., Argentina.\\
\email{ moviedo@itba.edu.ar}
}
\newcommand{\rr}{\mathbb{R}}
\begin{document}

\maketitle

\begin{abstract}
In this paper we address a $n+1$-body gravitational problem governed by the Newton's laws, where $n$ primary bodies orbit on a plane $\Pi$ and an additional massless particle moves on the perpendicular line to $\Pi$ passing through the center of mass of the primary bodies. We find a condition for that the configuration described be possible. In the case that the primaries are in a rigid motion we classify all the motions of the massless particle. We study the situation when the massless particle has a periodic motion with the same minimal period than primary bodies. We show that this fact is related with the existence of certain pyramidal central configuration.
\end{abstract}

\section{Introduction}
In this paper we study the following restricted  Newtonian $n+1$-body problem $P$ (see figure \ref{fig:conf_esp}):
\begin{itemize}
 \item[$P_1$] We have $n$ primary bodies of masses $m_1,\ldots,m_n$ and an additional massless particle.
 \item[$P_2$] The primary bodies are in a homographic motion (see \cite[Section 2.9]{JaumeLlibre276}). This motion is carried out in a plane $\Pi$.
 \item[$P_3$] The massless particle is moving  on the perpendicular line to $\Pi$ passing through the center of mass of the primary bodies.
\end{itemize}

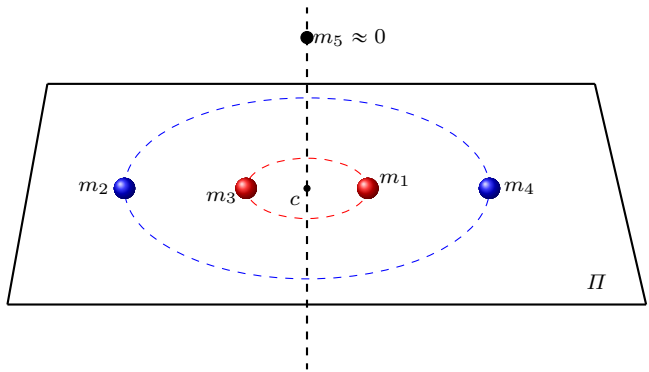
\begin{figure}[h]
\begin{center}
\begin{tikzpicture}[xscale=.8, yscale=.8]
				\draw[black,  line width=.8pt] (-6,0,-4.5)--(3,0,-4.5);
				\draw[black, line width=.8pt] (-3,0,5)--(7.5,0,5);
				\draw[black, line width=.8pt] (-3,0,5)--(-6,0,-4.5);
				\draw[black, line width=.8pt] (3,0,-4.5)--(7.5,0,5);
				\draw[red,dashed] (0,0,0) ellipse (1cm and 0.5cm);
				\draw[blue,dashed] (0,0,0) ellipse (3cm and 1.5cm);
					\shade[ball color=red]  (1,0) circle (5 pt);
					\node at (1.45,0.14) {$m_1$};
					\shade[ball color=red]  (-1,0) circle (5 pt);
					\node at (-1.4,-0.15) {$m_3$};

					\shade[ball color=blue]  (3,0) circle (5 pt);
					\node at (-3.5,0) {$m_2$};
					\shade[ball color=blue]  (-3,0)
					 circle (5 pt);
					\node at (3.5,0) {$m_4$};

        \draw[black, dashed, line width=.8pt] (0,0,0)--(0,3,0);
		\draw[black, dashed, line width=.8pt] (0,0,0)--(0,-3,0);
		\draw[fill=black](0,2.5,0) circle (0.1 cm);
		\node at (0.7,2.5,0) {$m_5\approx 0$};
				\node at (6.3,0,4) {$\Pi$};

				\draw[fill=black](0,0,0) circle (0.05 cm);
				\node at (-0.2,-0.2,0) {$c$};
\end{tikzpicture}\caption{Five-body problem with primaries in a collinear configuration}\label{fig:conf_esp}
\end{center}
\end{figure}

Problems like the one presented above have been extensively discussed in the literature. In \cite{sitnikov1960existence} K. Sitnikov considered the problem of two body in a Keplerian elliptic motion and a massless particle moving in the perpendicular line to the orbital plane passing
through the center of mass. Sitnikov obtained deep results about existence of solutions, for small $e>0$, with a chaotic behavior (see \cite[III(5)]{moser2016stable}). Periodic solutions for a Sitnikov configuration were considered in
\cite{corbera2000periodic,corbera2002symmetric,llibre2008families,pustyl1990certain}.

Generalized circular Sitnikov problems, i.e. we have $n\geq 3$ primaries in a relative equilibrium motion,   were addressed more recently.
In \cite{soulis2008periodic} Soulis, Papadakis and Bountis studied existence, linear stability and bifurcations for a problem similar to $P$. They considered  a Lagrangian equilateral triangle configuration for the primary bodies, which were supposed to have the same mass $m_1=m_2=m_3$. In \cite{bountis2009stability} Papadakis and Bountis extended the results of \cite{soulis2008periodic} to $n$ primaries ($n\geq 3$) in a poligonal equal masses configuration. Later  in \cite{pandey2013periodic}, Pandey and Ahmad generalized  the analysis started in \cite{soulis2008periodic} to the case with oblate primaries.
In \cite{li2013characterization} Li, Zhang and Zhao studied a special type of
restricted circular $n+1$-body problem  with equal masses for the primaries in a regular polygon configuration. Periodic solutions for generalized Sitnikov problems with primaries performing  no rigid motions were studied in \cite{pustyl1990certain,rivera2013periodic}. We emphasize that in
\cite{bountis2009stability,li2013characterization,pandey2013periodic,pustyl1990certain,rivera2013periodic,soulis2008periodic} it is supposed that
the primary bodies are in the vertices of a regular polygon.
As far as we know the first non-polygonal configurations of primary bodies was considered in \cite{marchesin2013spatial} where  Marchesin and Vidal studied the problem $P$ for a rigid motion  of primaries in a  rhomboidal configuration.
 In \cite{bakker2015separating} Bakker and Simmons studied scape regions for the massless particle in a problem similar to $P$ where the primaries performing certain type of periodic orbits including non homographic motions.

In the present paper, after introducing preliminaries facts in Section \ref{sec:pre},   we
obtain in Section \ref{sec:admisible.configuraciones} necessary and sufficient conditions on the configuration of primary bodies in order  that the $z$-axis to be invariant for the flow associated to the motion equations of the massless particle. For this type of configurations, that we call \emph{balanced}, the Sitnikov problem has sense.  In Section \ref{sec:addmisibles} we will find all balanced configurations for $n\leq 4$ primaries.  The Section \ref{sec:mas-mot} is devoted to describe all possible motion of the massless particle when the primaries are in a relative equilibrium (or rigid) motion. In this direction we observe that only are possible scape (both parabolic and hiperbolic) and periodic motions. Moreover, we will give a formula expressing the period of solutions  by means of integrals.  We prove in Corollary \ref{cor:sol.periodica.sist.completo} that the complete $n+1$-body system has  infinite quantity of periodic solutions. In Section  \ref{sec:sincro} we discuss the
situation when the entire system has a solution with the same period that the rigid motion of primaries. We call it \emph{synchronous solution}. Surprisingly the existence of synchronous solutions is related to the existence of certain pyramidal central configurations (for the definition of this concept see \cite{fayccal1996classification,faycaltesis,ouyang2004pyramidal}). Finally, in the last section, we study certain non balanced configurations which allows some particular solutions of problem $P$.

In this paper we generalize and extend some results previously obtained. For example, our results in Section \ref{sec:mas-mot} concerning to balanced configurations   generalize the results  in \cite{marchesin2013spatial} established for rhomboidal configurations. In Section \ref{sec:sincro} we prove that there exists synchronous solutions for primaries in a regular poligonal equal mass configuration if and only if $2\leq n\leq 472$. The sufficient of this fact was established in \cite{li2013characterization}.

\section{Preliminaries}\label{sec:pre}

We start considering $n$  mass points, $n>2$, of masses $m_1,\ldots,m_n$ moving in a Euclidean 3-dimensional space according to Newton's laws of motion. We assume that $x_1(t),\ldots,x_n(t)$ are the coordinates of the bodies in some inertial Cartesian coordinate system.  We can suppose, without any loss generality, that the center of mass   $C:=\sum_jm_jx_j/M$ ($M:=\sum_j m_j$) is fixed at the origin ($C=0$).

Initially we assume that the bodies are in a \emph{planar homographic motion} on the plane $\Pi$ (see \cite{JaumeLlibre276}), where it is assumed that $\Pi$ is the plane determined by the first two coordinates axes. Concretely  we are assuming  that

\begin{equation}\label{eq:x_j=rtQtq_j}
 x_j(t)=r(t)Q(\theta (t))q_j,
\end{equation}
where
\[
 Q(\theta )=\begin{pmatrix}
           \cos(\theta ) & -\sin(\theta ) & 0\\
           \sin(\theta ) & \cos(\theta ) & 0\\
           0            &     0     &  1\\
          \end{pmatrix}
\]
and $q_j\in\Pi$, $j=1,\ldots,n$ are vectors in a planar \emph{central configuration} (CC) in $\Pi$. We recall the following definition of this concept (see \cite{JaumeLlibre276}).

\begin{definition}\label{def:CC}
Let $q=(q_1,\ldots,q_n)$ be  a n-tuple of positions in $\rr^3$ and let $m=(m_1,\ldots,m_n)$ be a vector of masses. We say that $(q,m)$ is a central configuration if
there exists $\lambda\in\rr$ such that
\begin{equation}\label{eq:def.CC}
\nabla_jU(q_1,\ldots,q_n)+\lambda m_jq_j=0,\quad j=1,\ldots,n.
\end{equation}
where
\begin{equation}\label{eq:potencial}
 U(q_1,\ldots,q_n)=\sum_{i<j}\frac{m_im_j}{r_{ij}},
\end{equation}
 $r_{ij}=|q_i-q_j|$ and $\nabla_j$ denotes the $3$-dimensional partial gradient with respect to $q_j$.
\end{definition}

According to \cite[Eq. (2.16)]{JaumeLlibre276}  the functions $r(t)$ and $\theta (t)$ solves the two-dimen\-sional Kepler problem in polar coordinates, i.e.
\begin{equation}
 \begin{array}{rl}\label{eq:kepler.2.dim}
\ddot{r}(t)-r(t)\dot{\theta}(t)^2 & = -\frac{\lambda}{r(t)^2}\\
\frac{d }{dt}\left[ r(t)^2\dot{\theta}(t)\right] & =0.\\
\end{array}
\end{equation}
In the particular case of \emph{rigid motion}, we have  $r(t)\equiv 1$ and $\theta (t)=\sqrt{\lambda }t +\theta(0)$. In this case the primary bodies perform a periodic motion with minimal period $T:=2\pi/\sqrt{\lambda }$.

Let $x_0(t)$ be the position of the massless particle.
According to the Newtonian equations of motion $x_0$ satisfies
\begin{equation}\label{eq:newton}
 \ddot{x}_0=\sum_{i=1}^n\frac{m_i(x_i-x_0)}{|x_i-x_0|^3}=:f(t,x_0).
\end{equation}

In the previous equation, we assume know the positions of the primaries. Therefore, this equation plus  initial conditions completely determines the position of the particle.

\section{Balanced configurations}\label{sec:admisible.configuraciones}
Henceforth we denote by $L$ the coordinate $z$ axis.
It is well know that a  necessary and sufficient condition for that $L$ be invariant under the  flow associated to the non autonomous system  \eqref{eq:newton}, is that $f(t,L)\subset L$ for all $t$, i.e. $L$ is \emph{$f$-invariant} for every $t$.

\begin{definition}
We say that a
central configuration $(q,m)$ is \emph{balanced} if and only if $(q,m)$ satisfies that, for any $r>0$, such that the set
\[F_r:=\{i:|q_i|=r\}\]
is non empty, then
\begin{equation}\label{eq:suma0}\sum_{i\in F_r}m_iq_i=0.\end{equation}
i.e. every maximal set of  bodies which are equidistant from origin has center of mass equal to $0$.
\end{definition}

\begin{theorem}\label{thm:prim} $L$ is $f$-invariant for every $t$ if and only if $(q,m)$ is balanced.
\end{theorem}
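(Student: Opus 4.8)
The plan is to test the invariance condition $f(t,x_0)\in L$ at an arbitrary point $x_0=(0,0,z)$ of $L$ and see when the $\Pi$-component of the force disappears. Since each $q_j\in\Pi$ and $Q(\theta)$ fixes $L$ while mapping $\Pi$ to itself, every primary position $x_j(t)=r(t)Q(\theta(t))q_j$ lies in $\Pi$; hence, writing $e_3=(0,0,1)$, we get $|x_j(t)-x_0|^2=|x_j(t)|^2+z^2=r(t)^2|q_j|^2+z^2$. Decomposing $f(t,x_0)=\sum_j m_j(x_j(t)-x_0)|x_j(t)-x_0|^{-3}$ into its $L$-part and its $\Pi$-part, the $L$-part is automatically contained in $L$, so $f(t,x_0)\in L$ for all $z$ if and only if the $\Pi$-part vanishes. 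Because $x_j(t)\in\Pi$, its $\Pi$-part equals $x_j(t)$ itself, and factoring out the invertible map $r(t)Q(\theta(t))$ the condition becomes $\sum_{j=1}^n m_j q_j\,(r(t)^2|q_j|^2+z^2)^{-3/2}=0$ for every $z$ (and every $t$).

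First I would group the bodies by distance from the origin. Let $\rho_1<\cdots<\rho_k$ be the distinct values of $|q_j|$ and put $S_\ell:=\sum_{j\in F_{\rho_\ell}}m_j q_j$, so that the previous identity reads $\sum_{\ell=1}^k (r^2\rho_\ell^2+z^2)^{-3/2}\,S_\ell=0$ for all $z$, where $r=r(t_0)>0$ for any fixed instant $t_0$. By definition $(q,m)$ is balanced exactly when $S_\ell=0$ for every $\ell$, so the implication balanced $\Rightarrow$ $f$-invariant is immediate. For the converse it remains to deduce $S_\ell=0$ from the vanishing of this sum, and this is where the real work lies.

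The main obstacle is therefore the converse, which reduces to the \emph{linear independence} of the functions $z\mapsto g_\ell(z):=(c_\ell+z^2)^{-3/2}$ associated with the distinct positive constants $c_\ell:=r^2\rho_\ell^2$: granting independence, applying the vector identity componentwise forces each $S_\ell=0$. To establish it I would set $u=z^2\ge 0$ and use the integral representation $(c_\ell+u)^{-3/2}=\Gamma(3/2)^{-1}\int_0^\infty t^{1/2}e^{-c_\ell t}e^{-ut}\,dt$, so that $\sum_\ell a_\ell g_\ell$ is, up to a constant, the Laplace transform of $t^{1/2}\sum_\ell a_\ell e^{-c_\ell t}$. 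Vanishing of this transform on $[0,\infty)$ forces $t^{1/2}\sum_\ell a_\ell e^{-c_\ell t}\equiv 0$, hence $\sum_\ell a_\ell e^{-c_\ell t}\equiv 0$, and the classical independence of exponentials with distinct exponents gives $a_\ell=0$. (Equivalently, one may continue $u\mapsto\sum_\ell a_\ell(c_\ell+u)^{-3/2}$ analytically and peel off the coefficients by letting $u\to -c_\ell^{+}$ in increasing order of $c_\ell$, where only the $\ell$-th term is singular.) This disposes of the single nontrivial step and completes the equivalence.
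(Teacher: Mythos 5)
Your proposal is correct, and its overall architecture coincides with the paper's: you reduce invariance of $L$ to the vanishing of $\sum_j m_j q_j\,(r^2|q_j|^2+z^2)^{-3/2}$ for all $z$ (factoring out the invertible map $r(t)Q(\theta(t))$, exactly as the paper does before its equation \eqref{eq:f(L)cL-->sum=0}), group the bodies by distance from the origin, and reduce everything to the linear independence of the functions $u\mapsto(c_\ell+u)^{-3/2}$ for distinct $c_\ell>0$ on $[0,\infty)$ --- the content of the paper's Lemma \ref{lem:1}. Where you genuinely diverge is in the proof of that independence. The paper computes the Wronskian: using $\frac{d^iy_c}{dt^i}=\beta_i y_c^{(2i+3)/3}$ it factors the Wronskian into a product of nonzero constants, powers of $\lambda_j=(t+t_j)^{-1}$, and a Vandermonde determinant $\prod_{i<j}(\lambda_j-\lambda_i)$, which is nonzero precisely when the $t_j$ are distinct --- an entirely elementary, self-contained linear-algebra argument. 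You instead write $(c_\ell+u)^{-3/2}=\Gamma(3/2)^{-1}\int_0^\infty t^{1/2}e^{-c_\ell t}e^{-ut}\,dt$ and invoke injectivity of the Laplace transform (plus independence of exponentials), or alternatively peel off coefficients by analytic continuation to $u\to-c_\ell^{+}$ in increasing order of $c_\ell$. Both routes are valid: since $u=z^2$ sweeps all of $[0,\infty)$ at a single fixed instant $t_0$ with $r(t_0)>0$, the identity holds on a set amply sufficient for either Laplace uniqueness (Lerch's theorem) or the identity theorem for real-analytic functions. What your approach buys is generality and transparency --- it works verbatim for any kernel $(c+u)^{-\alpha}$ with $\alpha>0$, indeed for any completely monotone family with distinct ``exponents,'' whereas the paper's Wronskian trick exploits the specific power $-3/2$ (through the closed form of the derivatives); what it costs is reliance on a nontrivial, if classical, uniqueness theorem, where the paper needs only the Vandermonde identity. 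Your singularity-peeling variant is arguably the cleanest middle ground: elementary once one grants real-analyticity and the identity theorem.
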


For the proof of the previous theorem we need the following result.

\begin{lemma}\label{lem:1} For $c>0$ we define the function $y_c(t):=(c+t)^{-3/2}$. If $0<t_1<t_2<\ldots<t_k$ then the functions $y_j(t):=y_{t_j}(t)$  are linearly independent on  each open interval   $I\subset \mathbb{R}^+$.
\end{lemma}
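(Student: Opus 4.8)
The plan is to show that any vanishing linear combination on $I$ must have all coefficients zero, by combining real-analyticity with a boundary blow-up analysis. Suppose $\sum_{j=1}^{k} c_j y_j \equiv 0$ on the open interval $I \subset \mathbb{R}^+$. Since $t_1 < \cdots < t_k$, one has $-t_1 = \max_j(-t_j)$, so for every $t > -t_1$ all the quantities $t + t_j$ are strictly positive; hence each $y_j(t) = (t_j+t)^{-3/2}$, and therefore the combination $\varphi(t) := \sum_j c_j y_j(t)$, is real-analytic on the whole ray $(-t_1, \infty)$. Because $I \subset \mathbb{R}^+ \subset (-t_1, \infty)$ is open and $\varphi$ vanishes on $I$, the identity theorem for real-analytic functions forces $\varphi \equiv 0$ on all of $(-t_1, \infty)$.

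Now I would extract the coefficients one at a time, starting from the rightmost singularity. Letting $t \to -t_1^{+}$, for $j \geq 2$ the term $c_j(t_j+t)^{-3/2}$ tends to the finite limit $c_j(t_j - t_1)^{-3/2}$, whereas $(t_1 + t)^{-3/2} \to +\infty$. Thus $c_1(t_1+t)^{-3/2}$ is the only unbounded term, and the identity $\varphi \equiv 0$ near $-t_1^{+}$ is possible only if $c_1 = 0$ (equivalently, multiply $\varphi$ by $(t_1+t)^{3/2}$ and let $t\to -t_1^{+}$). With $c_1 = 0$, the remaining combination $\sum_{j\geq 2} c_j y_j$ still vanishes on $I$ and is now real-analytic on the larger ray $(-t_2, \infty)$, so it vanishes there as well; repeating the blow-up at $t \to -t_2^{+}$ yields $c_2 = 0$. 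Iterating this peeling argument gives $c_1 = c_2 = \cdots = c_k = 0$, which is the desired linear independence.

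The point that requires care — and the only genuine obstacle — is the passage from \emph{vanishing on the small interval} $I$ to \emph{vanishing in a punctured neighbourhood of the singularity} $-t_j$, since the blow-up argument needs information arbitrarily close to $-t_j$; this is exactly what analyticity and the identity theorem provide. One must also approach the singularities in the correct order, from the largest $-t_1$ downwards, so that at each stage all the surviving bases $t_j + t$ remain positive and the functions stay real. As an alternative I could avoid analytic continuation entirely by computing the Wronskian $W(y_1,\ldots,y_k)$: since $y_j^{(m)}(t) = \alpha_m (t_j+t)^{-3/2 - m}$ with $\alpha_m = \prod_{i=0}^{m-1}(-3/2 - i)$ independent of $j$, factoring $\alpha_m$ from each row and $(t_j+t)^{-3/2}$ from each column reduces $W$ to a Vandermonde determinant in the distinct variables $(t_j + t)^{-1}$, which is nonzero at every $t \in \mathbb{R}^+$; a nonvanishing Wronskian at a single point of $I$ already implies linear independence on $I$.
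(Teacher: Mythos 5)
Your proposal is correct, and your primary argument takes a genuinely different route from the paper, which proves the lemma by a direct Wronskian computation: using $\frac{d^i y_c}{dt^i}=\beta_i\,(c+t)^{-3/2-i}$, it factors the constants $\beta_i$ from the rows and the powers $(t+t_j)^{-3/2}$ from the columns, reducing $W(y_1,\ldots,y_k)(t)$ to a Vandermonde determinant in the distinct quantities $\lambda_j=(t+t_j)^{-1}$, hence nonzero at every $t\in I$ --- this is exactly the alternative you sketch in your closing remark, so your fallback coincides with the paper's proof. Your main route (real-analyticity of $\varphi=\sum_j c_j y_j$ on $(-t_1,\infty)$, the identity theorem to propagate vanishing from $I$ to the whole ray, then peeling off coefficients by blow-up at the singularities $-t_1>-t_2>\cdots>-t_k$ taken in decreasing order) is sound: at each stage exactly one term is unbounded, so multiplying by $(t_j+t)^{3/2}$ and letting $t\to -t_j^{+}$ forces $c_j=0$, and once that term is removed the surviving combination is analytic on the next, larger ray, so the identity theorem applies again; you correctly identify that the order of the singularities and the continuation step are the delicate points. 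As for what each approach buys: the paper's computation is self-contained and purely algebraic, avoids any continuation principle, and yields the slightly stronger fact that the Wronskian is nonvanishing at \emph{every} point; your continuation-and-blow-up argument requires the identity theorem for real-analytic functions but is more robust, since it works verbatim for any exponents $\alpha_j>0$ in place of $3/2$ (even differing from term to term) and for any family of functions with distinct blow-up locations, with no derivative or determinant bookkeeping. The only cosmetic point is that you should note $I\neq\emptyset$ before invoking the identity theorem, an assumption the lemma (and the paper's proof, which evaluates $W$ at a point of $I$) makes tacitly.
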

\begin{proof} It is sufficient to prove that the Wronskian

 \[W:=W(y_1,\ldots,y_k)(t)=\det\begin{pmatrix}
			      y_1 & \cdots & y_k\\
			      \frac{dy_1}{dt}&  \cdots & \frac{dy_k}{dt}\\
			      \vdots & \ddots & \vdots \\
			      \frac{d^{k-1}y_1}{dt^{k-1}}&  \cdots & \frac{d^{k-1}y_k}{dt^{k-1}}\\
                           \end{pmatrix}
\]
is not null on $I$.

Using induction is easy to show that
\begin{equation}\label{eq:der_ind}\frac{d^iy_c}{dt^i}=\beta_{i}y_{c}^{\frac{2i+3}{3}},\quad\hbox{for some }\beta_{i}\neq 0, \hbox{ and for all }i=1,\ldots.
\end{equation}
Fix any $t\in I$. Then, according to \eqref{eq:der_ind} and writing $\lambda_j:=(t+t_j)^{-1}$, we have

\[
\begin{split}
  W(t)&=\det
    \begin{pmatrix}
      \lambda_1^{3/2} & \lambda_2^{3/2} &\cdots & \lambda_k^{3/2} \\
      \beta_1\lambda_1^{5/2} &\beta_1 \lambda_2^{5/2} &\cdots &\beta_1 \lambda_k^{5/2}\\
      \vdots & \vdots &\ddots & \vdots\\
      \beta_{k-1}\lambda_1^{k+1/2} & \beta_{k-1}\lambda_2^{k+1/2} &\cdots & \beta_{k-1}\lambda_k^{k+1/2}
    \end{pmatrix}
  \\
  &= \beta_1\beta_2\cdots\beta_{k-1} \lambda_1^{3/2}\lambda_2^{3/2}\cdots \lambda_k^{3/2}
     \det \begin{pmatrix}
      1& 1 &\cdots & 1 \\
      \lambda_1 & \lambda_2 &\cdots & \lambda_k\\
      \vdots & \vdots &\ddots & \vdots\\
      \lambda_1^{k-1} & \lambda_2^{k-1} &\cdots & \lambda_k^{k-1}
    \end{pmatrix}
  \\
  &= \beta_1\beta_2\cdots\beta_{k-1} \lambda_1^{3/2}\lambda_2^{3/2}\cdots \lambda_k^{3/2}
  \prod_{1\leq i<j\leq n}(\lambda_j-\lambda_i)
,
\end{split}
\]
where the last equality follows of the well known Vandermonde determinant identity. Therefore $W\neq 0$ if and only if $\lambda_i\neq\lambda_j$, $i\neq j$,
which in turn is equivalent to $t_i\neq t_j$, $i\neq j$.
\end{proof}

\begin{proof}[Proof of Theorem \ref{thm:prim}]
The condition $f(t,L)\subset L$ for all $t$ is equivalent to
\begin{equation}\label{eq:f(L)cL-->sum=0}
 \sum_{i=1}^n\frac{m_ir(t)Q(\theta (t))q_i}{\left(r(t)^2|q_i|^2+z^2\right)^{3/2}}=0\in\rr^2,
\end{equation}
for every $t,z\in \rr$.

Let $D=\{|q_i|: i=1,\ldots,n\}$.  Suppose that $D=\{s_1,\ldots,s_k\}$, with $s_i\neq s_j$ for $i\neq j$.  Therefore $\{1,\ldots,n\}=F_{s_1}\cup \cdots\cup F_{s_k}$. Then, multiplying equation \eqref{eq:f(L)cL-->sum=0} by $r(t)^2Q^{-1}(\theta(t))$  and writing $\zeta=(z/r(t))^2$ we have that \eqref{eq:f(L)cL-->sum=0} is equivalent to

\[\sum_{j=1}^k\left\{\frac{1}{(s_j^{2}+\zeta)^{3/2}}\sum_{i\in F_{s_j}}m_iq_i\right\}=0.\]
According to Lemma \ref{lem:1}, the last equation is equivalent to \eqref{eq:suma0}.
\end{proof}

\section{Balanced collisionless configurations for $n\leq 4$}\label{sec:addmisibles}

In this section we find all balanced collisionless configurations with $n\leq 4$. Hereafter  we said that   $q_1,\ldots,q_n$ is a collisionless configuration  when $q_i\neq 0$ for $i=1,\ldots,n$. We note that despite this fact the system can have collisions, for example in the case when the primaries have a homothetic collapsing motion.  Since the center of mass is an excluded position a balanced configuration satisfies
\begin{equation}\label{F_r.no.vac-->CF_r.geq2}
 F_r\neq \emptyset \Rightarrow \# F_r\geq 2.
\end{equation}

It is atrivial fact that  two point masses $m_1$ and $m_2$ is balanced if and only if $m_1=m_2$.

From \eqref{F_r.no.vac-->CF_r.geq2}, a $3$-body balanced configuration consists of equidistant  bodies from the origin. Therefore, it must to be the Lagrangian equilateral triangle. Now, by equation \eqref{eq:suma0} and an elementary geometrical reazoning   we have that $m_1=m_2=m_3$.

The case $n=4$ is more interesting. We include Definition \ref{def:bis.per} and Theorem \ref{thm:bisector.moeckel}, which are presented for the first time in  \cite{moeckel1990central}, for the reader's convenience.

\begin{definition}\label{def:bis.per}
Let $q$ be a planar configuration. For each pair, $i$, $j$, the line
containing $q_i$ and $q_j$ together with its perpendicular bisector form axes which
divide the plane into four quadrants. The union of the first and third quadrants
is an hourglass shaped region which will be called a `cone'; similarly,
the second and fourth quadrants together form another cone. The phrase `open
cone' refers to a cone minus the axes.
\end{definition}

\begin{theorem}[Perpendicular Bisector Theorem]\label{thm:bisector.moeckel}
Let $(q,m)$ be a planar central configuration and let
$q_i$ and $q_j$ be any two of its points. Then if one of the two open cones determined
by the line through $q_i$ and $q_j$ and its perpendicular bisector contains points of
the configuration, so does the other one.
\end{theorem}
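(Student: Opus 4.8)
The plan is to work directly from the central configuration equation \eqref{eq:def.CC}. Since $\nabla_\ell U=\sum_{k\neq\ell}m_\ell m_k(q_k-q_\ell)/r_{\ell k}^3$, dividing \eqref{eq:def.CC} by $m_\ell$ puts it in the form $\sum_{k\neq\ell}m_k(q_\ell-q_k)/r_{\ell k}^3=\lambda q_\ell$. First I would choose Cartesian coordinates in the plane of the configuration so that the line through $q_i$ and $q_j$ is the first axis and the midpoint of the segment $q_iq_j$ is the origin; then $q_i=(a,0)$ and $q_j=(-a,0)$ with $a=r_{ij}/2>0$, the perpendicular bisector is the second axis, and an arbitrary body is written $q_k=(\xi_k,\eta_k)$. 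In these coordinates the two open cones are exactly $\{\xi\eta>0\}$ and $\{\xi\eta<0\}$, while the axes (the excluded set) form $\{\xi\eta=0\}$.

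The crucial step is to project the central configuration equations for $q_i$ and $q_j$ onto the unit normal $n=(0,1)$ of the line. Since $q_i$ and $q_j$ lie on the first axis we have $\langle q_i,n\rangle=\langle q_j,n\rangle=0$, so the terms $\lambda q_i$ and $\lambda q_j$ drop out of the projected equations regardless of the sign of $\lambda$; moreover the mutual term between $q_i$ and $q_j$ contributes nothing because $\eta_i=\eta_j=0$. This yields
\[\sum_{k\neq i,j}m_k\frac{\eta_k}{r_{ik}^3}=0\qquad\text{and}\qquad\sum_{k\neq i,j}m_k\frac{\eta_k}{r_{jk}^3}=0,\]
and subtracting the two gives the single identity
\[\sum_{k\neq i,j}m_k\,\eta_k\left(\frac{1}{r_{ik}^3}-\frac{1}{r_{jk}^3}\right)=0.\]

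It then remains to read off the sign of each summand. Because the perpendicular bisector separates the points closer to $q_i$ from those closer to $q_j$, one has $r_{ik}<r_{jk}$ exactly when $\xi_k>0$, so $1/r_{ik}^3-1/r_{jk}^3$ carries the same sign as $\xi_k$. Hence each term $m_k\eta_k(r_{ik}^{-3}-r_{jk}^{-3})$ has the sign of $\xi_k\eta_k$: it is strictly positive for bodies in the open cone $\{\xi\eta>0\}$, strictly negative for bodies in the open cone $\{\xi\eta<0\}$, and zero for bodies lying on an axis. Since the whole sum vanishes, it is impossible for every nonzero term to share a single sign; therefore the presence of a body in one open cone forces the presence of a body in the other, which is precisely the assertion. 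I expect the only delicate point to be this sign bookkeeping — specifically, verifying the equivalence $r_{ik}<r_{jk}\iff\xi_k>0$ and confirming that points on either axis contribute exactly zero, so that they are correctly disregarded as belonging to neither open cone.
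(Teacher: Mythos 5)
The paper itself offers no proof of Theorem \ref{thm:bisector.moeckel}: it is quoted from \cite{moeckel1990central} ``for the reader's convenience,'' so there is no internal argument to compare against. Your route is in substance the standard (Moeckel) proof: project the central configuration equations for $q_i$ and $q_j$ onto the normal $n$ of the line, subtract, and do sign bookkeeping. That bookkeeping is sound: from $r_{jk}^2-r_{ik}^2=4a\xi_k$ one gets $r_{ik}<r_{jk}\iff\xi_k>0$, so each term $m_k\eta_k\bigl(r_{ik}^{-3}-r_{jk}^{-3}\bigr)$ has the sign of $\xi_k\eta_k$, points on either axis contribute exactly zero, and a vanishing sum cannot have all its nonzero terms of one sign.

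There is, however, one genuine flaw in the middle of the argument. Definition \ref{def:CC} presupposes the center of mass at the origin (summing \eqref{eq:def.CC} over $j$ forces $\lambda\sum_j m_jq_j=0$), and this normalization is not preserved when you translate the origin to the midpoint of $q_iq_j$. In the translated frame the equation reads $\sum_{k\neq\ell}m_k(q_\ell-q_k)/r_{\ell k}^3=\lambda(q_\ell-c)$, where $c$ is the center of mass in the new coordinates, and $c$ need not lie on the line through $q_i$ and $q_j$. Consequently your two displayed intermediate identities $\sum_{k\neq i,j}m_k\eta_k/r_{ik}^3=0$ and $\sum_{k\neq i,j}m_k\eta_k/r_{jk}^3=0$ are false in general: for the Lagrange equilateral triangle with $i=1$, $j=2$, the first sum reduces to the single term $m_3\eta_3/r_{13}^3\neq 0$. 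What is true is that both projected equations equal the same quantity $\lambda\langle c,n\rangle$, so the subtraction you perform anyway still yields exactly the key identity $\sum_{k\neq i,j}m_k\eta_k\bigl(r_{ik}^{-3}-r_{jk}^{-3}\bigr)=0$; in the triangle example this reads $m_3\eta_3\bigl(r_{13}^{-3}-r_{23}^{-3}\bigr)=0$, which holds because $r_{13}=r_{23}$ (indeed $q_3$ sits on the perpendicular bisector, i.e.\ on an axis). So the proof is repaired by keeping the term $\lambda\langle q_\ell-c,n\rangle$ on the right-hand side and observing that it is identical for $\ell=i$ and $\ell=j$ and cancels in the difference; as written, the intermediate step is wrong even though the identity you extract from it, and everything after it, is correct.
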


Next we characterize all the $4$-body balanced collisionless configurations.

\begin{theorem}\label{thm:caracterizacion4}
Let $(q,m)$ be a 4-body central configuration. Then $(q,m)$ is  balanced and collisionless if and only if,  for a suitable enumeration  of bodies,   $q_1=-q_3$, $q_2=-q_4$, $m_1=m_3$,  $m_2=m_4$, and  $(q,m)$ is of some of the following mutually exclusive types:
\begin{description}
\item[CCcl.]   collinear,
\item[CCr.]  a rhombus with $r_{13}<r_{24}$ and $m_1>m_2$,
\item[CCs.]  a square with four equal masses.
\end{description}
\end{theorem}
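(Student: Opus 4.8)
The plan is to prove both implications of Theorem~\ref{thm:caracterizacion4}, using the balanced condition together with the Perpendicular Bisector Theorem (Theorem~\ref{thm:bisector.moeckel}) to pin down the geometry. The backward direction (that configurations of types \textbf{CCcl}, \textbf{CCr}, \textbf{CCs} are balanced and collisionless) should be routine: in each case the bodies split into at most two maximal equidistant shells $\{q_1,q_3\}$ and $\{q_2,q_4\}$, and since $q_1=-q_3$ with $m_1=m_3$ (and likewise for the pair $2,4$), each shell satisfies $\sum_{i\in F_r}m_iq_i=0$ by antipodality; collisionlessness holds because all $q_i\neq 0$. The substance of the theorem lies in the forward direction, so I would devote most of the argument to it.

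For the forward direction I would first exploit that a balanced collisionless $4$-body configuration must, by \eqref{F_r.no.vac-->CF_r.geq2}, have each nonempty shell $F_r$ containing at least two bodies. With four bodies this leaves two cases: either all four are equidistant from the origin (one shell of size four), or they split into two shells of size two. I would handle the two-shells case first. If $F_{s_1}=\{q_1,q_3\}$ and $F_{s_2}=\{q_2,q_4\}$, then \eqref{eq:suma0} gives $m_1q_1+m_3q_3=0$ and $m_2q_2+m_4q_4=0$; since $|q_1|=|q_3|$ the first relation forces $q_1=-q_3$ and $m_1=m_3$ (a vector and its negative, equal magnitudes, equal masses), and similarly $q_2=-q_4$, $m_2=m_4$. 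This already gives the symmetry claimed; the remaining work is to show the geometric shape is collinear or a (proper) rhombus, and to fix the mass--size inequality in the rhombus case.

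The key step, and the one I expect to be the main obstacle, is ruling out ``generic'' two-shell configurations and forcing the diagonals $q_1q_3$ and $q_2q_4$ to be either collinear or mutually perpendicular. Here I would invoke Theorem~\ref{thm:bisector.moeckel}: taking the line through $q_1$ and $q_3$ (which passes through the origin) and its perpendicular bisector, I would analyze in which cone the points $q_2,q_4=-q_2$ can lie. Because $q_2$ and $q_4$ are antipodal, they sit in opposite cones unless they lie on one of the axes; applying the Perpendicular Bisector Theorem to suitable pairs should force the second diagonal to lie either along the line $q_1q_3$ (the collinear case \textbf{CCcl}) or along its perpendicular bisector (the rhombus case, diagonals perpendicular). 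I would then use the central-configuration equations \eqref{eq:def.CC} to compare masses: writing out the equilibrium condition for a rhombus and using that the shorter diagonal corresponds to the more strongly attracted (hence heavier) pair should yield $r_{13}<r_{24}\Leftrightarrow m_1>m_2$, with equality of diagonals forcing the square and, by symmetry of the equations, all masses equal—this is type \textbf{CCs}.

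Finally I would dispatch the single-shell case, where all four $|q_i|$ are equal so the configuration is inscribed in a circle centered at the origin with $\sum_i m_iq_i=0$. I would argue that a concyclic central configuration with center of mass at the center must be a square with equal masses: the Perpendicular Bisector Theorem applied to each pair of adjacent points constrains the configuration to be regular, and the balanced condition $\sum m_iq_i=0$ on a single shell, combined with the CC equations, forces equal masses. This recovers \textbf{CCs} and shows it is consistent with, but disjoint from, the two-shell classification, completing the mutually-exclusive trichotomy. The careful bookkeeping of which cones contain which antipodal points is where I anticipate the argument will require the most attention.
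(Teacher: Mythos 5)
Your proposal follows the same skeleton as the paper's proof: split by \eqref{F_r.no.vac-->CF_r.geq2} into the two-shell and one-shell cases, extract $q_1=-q_3$, $m_1=m_3$, $q_2=-q_4$, $m_2=m_4$ from \eqref{eq:suma0}, and use the Perpendicular Bisector Theorem to force $q_2,q_4$ onto the line $P$ through $q_1,q_3$ or onto its perpendicular bisector $M$. However, your key geometric claim is backwards. The axes of Definition~\ref{def:bis.per} intersect at the midpoint of the segment $q_1q_3$, which here is the origin, and the central reflection through that point maps each quadrant to its opposite quadrant, hence preserves each hourglass cone; therefore the antipodal points $q_2$ and $q_4=-q_2$ lie in the \emph{same} open cone, not in ``opposite cones'' as you write. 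Under your version, Theorem~\ref{thm:bisector.moeckel} would be automatically satisfied and would force nothing, so your dichotomy (collinear or perpendicular diagonals) would not follow. The argument that works, and is the paper's, is the opposite one: if $q_2$ lay in an open cone $C_1$, then $q_4$ would lie in $C_1$ as well, the theorem would demand a point of the configuration in the other open cone, and none is available since $q_1,q_3$ lie on the axes; this contradiction puts $q_2,q_4$ on $P$ (type CCcl) or on $M$ (rhombus). By contrast, your plan to obtain $r_{13}<r_{24}\Leftrightarrow m_1>m_2$ directly from \eqref{eq:def.CC} is sound and is a genuine simplification: with half-diagonals $a,b$ and side $d=\sqrt{a^2+b^2}$, equating the values of $\lambda$ coming from bodies $1$ and $2$ gives $m_1b^3(d^3-8a^3)=m_2a^3(d^3-8b^3)$, from which positivity of the masses and elementary manipulation yield $m_1>m_2\Leftrightarrow a<b$ and $m_1=m_2\Rightarrow a=b$; the paper instead quotes \cite{long2002four} for the inequality and \cite{perez2007convex} to exclude the equal-mass non-square rhombus.

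The one-shell case is, however, a genuine gap in your proposal. When $|q_1|=|q_2|=|q_3|=|q_4|$, the balanced condition gives only the single relation $\sum_i m_iq_i=0$ --- no antipodal pairing is available --- so what must be proved is that every four-body concyclic central configuration whose center of mass coincides with the circumcenter is the equal-mass square. Your assertion that the Perpendicular Bisector Theorem ``applied to each pair of adjacent points constrains the configuration to be regular'' is false as stated: isosceles trapezoids are concyclic central configurations that pass every perpendicular-bisector test, and they are excluded only by the centroid condition. This classification is precisely Hampton's theorem \cite{hampton2005co}, which is what the paper invokes at this point, and whose proof rests on Dziobek-type equations in the mutual distances rather than on cone bookkeeping. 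To complete your argument you must either cite that result, as the paper does, or supply a substitute proof, which is substantially harder than your sketch suggests.
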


\begin{remark} In \cite{shoaib2011collinear} was studied central configurations of type CCcl, while that CCr configurations were addressed in \cite{long2002four} and \cite{perez2007convex}.

\end{remark}

\begin{proof}
From \eqref{F_r.no.vac-->CF_r.geq2} we have to consider two cases.

\emph{Case 1.}  $m_1\geq m_2$, $|q_1|\neq|q_2|$, $|q_1|=|q_3|$ and $|q_2|=|q_4|$. Now \eqref{eq:suma0} implies that
 $m_1=m_3$, $m_2=m_4$, $q_1=-q_3$ and $q_2=-q_4$.  We divide the plane in two open cones $C_i$, $i=1,2$, by means of  the line $P$ joining $q_1$  and $q_3$ together with its perpendicular bisector $M$.  From Theorem \ref{thm:bisector.moeckel}, if  $q_2$  is in $C_1$, then  $q_4$ is in $C_2$, and vice versa. This is a contradiction with the fact that $q_2=-q_4$. Then $q_2,q_4\in P$ or $q_2,q_4\in M$, i.e. $q$ is collinear or is a rhombus with equal masses in opposite vertices. In the first case, $(q,m)$ is of  CCcl type. In the second case, if $m_1>m_2$,   was proved in \cite[Eqs. $(3.44)$ and $(3.45)$]{long2002four} that $r_{13}<r_{24}$. Hence $(q,m)$ is of  CCr type. From \cite[Corollary 2]{perez2007convex} if $m_1=m_2$ then the configuration is a square witch is a contradiction with the fact that $|q_1|\neq|q_2|$.

\emph{Case 2.} $|q_1|=|q_2|=|q_3|=|q_4|$. In this situation,  was proved in \cite{hampton2005co} that the configuration is the equal mass square.
\end{proof}

\section{Massless particle motion}\label{sec:mas-mot}
In this and next sections we suppose that the primary bodies are in a $T$-periodic rigid motion associated to a balanced collisionless CC $(q,m)$, i.e  $r(t)\equiv 1$ and according to remark following equation \eqref{eq:kepler.2.dim}, $\theta (t)=\sqrt{\lambda}t$. Without loss of generality,  we have assumed here that $\theta(0)=0$. For the particle, we suppose that it is moving on $L$, i.e. $x_0(t)=(0,0,z(t))$. From Theorem \ref{thm:prim}, $x_0$ is solution of \eqref{eq:newton}, if and only if $z(t)$ is solution of the autonomous equation

\begin{equation}\label{eq:eq_new_red}
 \ddot{z}=-\sum_{i=1}^n\frac{m_iz}{(s_i^2+z^2)^{3/2}},
\end{equation}
where $s_i=|q_i|$.

We will analyze all possible motions for the massless particle $x_0$. In particular we will see that all motion is periodic or is a scape trayectory. We will find that there exists $T_0$-periodic solutions for all $T_0$ in an interval  $(\sigma(q,m),+\infty)$. This fact implies that there exists an infinity quantity of periodic solutions for the entire $n+1$-body system.

The second order equation \eqref{eq:eq_new_red} is conservative, therefore solutions conserve the energy
\begin{equation}\label{eq:conser.energ}
E(z,v):=\frac{|v|^2}{2}-\sum_{i=1}^{n} \frac{m_i}{\left(s_i^2+z^2\right)^{\frac12}},
\end{equation}
i.e. $E(z(t),\dot{z}(t))$ is constant.

Following \cite{VladimirI.Arnold229} (see also \cite{marchesin2013spatial})  we introduce the next concepts.
\begin{definition}[Chazy, 1922]
 A solution $z(t)$ of \eqref{eq:eq_new_red} such that  $\lim\limits_{t\to\infty}z(t)=\infty$ is called hyperbolic for $t\to \infty$ when $\lim\limits_{t\to\infty}\dot{z}(t)= z_{\infty}\neq 0$ and is called parabolic if $\lim\limits_{t\to\infty}\dot{z}(t)=0$.
\end{definition}

The following theorem characterize all the possible motions for the massless particle.

\begin{theorem}\label{thm:prin_ine} We assume that $(q,m)$ is a balanced collisionless configuration and the primaries are in a rigid motion. Every solution of \eqref{eq:eq_new_red} is of some of the following types:
\begin{enumerate}
\item\label{1} Hyperbolic, when $E>0$,
\item\label{2} Parabolic, when $E=0$,
\item\label{3} Periodic, when $E_{min}:=-\sum_{i=1}^{n}\frac{m_i}{s_i}<E<0$.
\item\label{4} Equilibrium solution when $E=E_{min}$.
\end{enumerate}
\end{theorem}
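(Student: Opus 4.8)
The plan is to analyze the one-dimensional conservative system \eqref{eq:eq_new_red} entirely through the energy function \eqref{eq:conser.energ}, which is essentially a phase-plane argument. First I would introduce the potential
\[
V(z):=-\sum_{i=1}^n\frac{m_i}{\left(s_i^2+z^2\right)^{1/2}},
\]
so that $E(z,v)=\tfrac12 v^2+V(z)$ and $\ddot z=-V'(z)$. The key qualitative features of $V$ drive everything: $V$ is even in $z$, it attains its unique minimum $V(0)=E_{min}=-\sum_i m_i/s_i$ at $z=0$ (since each summand $-(s_i^2+z^2)^{-1/2}$ is increasing in $|z|$), it is strictly increasing for $z>0$ and strictly decreasing for $z<0$, and $V(z)\to 0$ as $|z|\to\infty$. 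I would record these monotonicity and limit facts as a short preliminary computation, noting in particular that $0\le V(z)-E_{min}$ and $V(z)<0$ for all $z$. Since $V'(0)=0$ is the only critical point, the only equilibrium of the system is $(z,v)=(0,0)$, with energy exactly $E_{min}$, which disposes of case \eqref{4}.

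Next I would treat the three remaining cases by the standard device of solving for velocity along a level set: on any trajectory we have $\dot z^2=2\bigl(E-V(z)\bigr)$. For case \eqref{3}, when $E_{min}<E<0$, the equation $V(z)=E$ has, by the monotonicity of $V$ on each half-line together with $V(z)\to 0>E$, exactly two solutions $\pm z_{max}(E)$ with $z_{max}>0$; the accessible region $\{z:V(z)\le E\}$ is the bounded interval $[-z_{max},z_{max}]$, and the motion oscillates between the two turning points $\pm z_{max}$ where $\dot z=0$. Because $V'(\pm z_{max})\ne 0$ these are simple turning points, so the orbit is a closed curve in the $(z,v)$-plane and the solution is periodic; I would make the period finite explicitly by the integral
\[
T_0=2\int_{-z_{max}}^{z_{max}}\frac{dz}{\sqrt{2\bigl(E-V(z)\bigr)}},
\]
whose convergence at the endpoints follows from the nonvanishing of $V'$ there (the integrand has an integrable inverse-square-root singularity). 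For cases \eqref{1} and \eqref{2}, when $E\ge 0$, we have $E-V(z)\ge -V(z)>0$ for all $z$, so $\dot z$ never vanishes and the trajectory is unbounded; taking without loss of generality $\dot z>0$ one shows $z(t)\to\infty$, and since $V(z)\to 0$ the limiting speed satisfies $\lim_{t\to\infty}\dot z(t)^2=2E$. Thus $E>0$ gives $\dot z\to\sqrt{2E}\ne 0$ (hyperbolic) and $E=0$ gives $\dot z\to 0$ (parabolic).

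The one point needing a little care, and the main obstacle, is ruling out that a trajectory with $E\ge 0$ reaches infinity in finite time or otherwise fails to be globally defined, and symmetrically that in the bounded case the solution genuinely oscillates rather than stalling; both are handled by the fact that $V$ is smooth and bounded with bounded derivative away from no singularities (the configuration is collisionless only guarantees $s_i>0$, which is exactly what keeps $V$ smooth everywhere on $L$, so there are no finite-time singularities on the $z$-axis). I would therefore note that the vector field $(v,-V'(z))$ is smooth and, on any bounded energy region, the motion exists for all time, while in the unbounded case a standard estimate using $\dot z^2=2(E-V(z))\ge 2E$ (for $E>0$) or a comparison argument (for $E=0$, where $-V(z)\ge c/z$ for large $z$ forces only polynomial-in-$t$ growth) shows escape occurs as $t\to\infty$, not in finite time. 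Collecting the four cases completes the classification.
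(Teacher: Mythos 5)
Your proof is correct and follows essentially the same route as the paper's: a phase-plane/energy analysis of the one-dimensional conservative system, with the unique equilibrium $(0,0)$ at energy $E_{min}$, turning points and a closed orbit for $E_{min}<E<0$, and a monotone unbounded trajectory with $\dot z^2\to 2E$ for $E\ge 0$ (the paper phrases this through the geometry of the level sets $S(E)$ rather than the potential well, and defers the period integral to Theorem \ref{thm:prop.periodos}, but the substance is identical). One small slip worth fixing: the estimate that rules out escape in \emph{finite} time is the upper bound $\dot z^2=2\bigl(E-V(z)\bigr)\le 2(E-E_{min})$, which follows from $V\ge E_{min}$ as you already recorded, not the lower bound $\dot z^2\ge 2E$ (the latter only guarantees that escape does occur).
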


\begin{proof}
We follow a standard argument for hamiltonian systems (see \cite{A}).

We consider the level sets $S(E)=\{(z,v):E(z,v)=E\}$, in the phase space $(z,v)$. An elementary analysis shows that
\begin{itemize}
 \item If $E\geq 0$ then $S(E)$ is the union of two bounded graphs. They are symmetric with respect to $z$-axis, each of which is contained  in some semiplane $v> 0$ or $v<0$. The $v$-positive branch is the graph of a function $v(E,z)$, which  is decreasing with respect to $|z|$. Moreover, $\lim\limits_{|z|\to \infty}v(E,z)=\sqrt{2E}$.

 \item For every $E\geq E_{min}$, the energy curve $S(E)$ cut the $v$-axis at the value $\pm(2E+2\sum_{i=1}^n m_is_i^{-1})^{\frac12}$.

 \item If $E_{min}<E<0$ then $S(E)$ is a simple closed curve symmetric with respecto to $z$ and $v$ axes.

 \item  An energy curve cut the $z$-axis, only in the case that $E<0$, at $\pm z_{E}$, where $z_E$ is the only positive solution of $-\sum_{i=1}^n m_i (s_i^2+z_{E}^2)^{-\frac12}=E$.
\end{itemize}

In the figure  \ref{fig:conf_esp} we show the phase portrait for a rhomboidal configuration with masses $m_1=m_3=1$ and $m_2=m_4=0.5$.

The function $\varphi(t)=(z(t),\dot{z}(t))$ solves the system $\dot{\varphi}(t)=F(\varphi(t))$, where \linebreak $F(z,v)=(v,-\sum_{i=1}^{n}m_iz (s_i^2+z^2)^{-3/2})$. The only fixed point of $F$ is $(z,v)=(0,0)$. Therefore, the level surfaces $S(E)$, with $E\neq E_{min}$, do not contain stationary points. Well know arguments imply that the trayectories $t\mapsto (z(t),\dot{z}(t))$ are defined in every time and   fill completely the connected component of the energy curves.

We observe that any solution $z$ crosses the $v$-axis. On the other hand if $E\geq 0$ and $v(E,0)>0$ ($v(E,0)<0$) then $z(t)$ is increasing (decreasing) with respect to $t$. If $z(t)$ remained bounded when $t\to +\infty$, then there would be the limit $\zeta_{\infty}:=\lim\limits_{t\to\infty}z(t)$. This would imply  that $(\zeta_{\infty},0)$ is a fixed point of $F$, which is a contradiction.  As a consequence, if $E\geq 0$ then $|z(t)|\to \infty$ when $t\to  +\infty$. Moreover $\lim\limits_{t\to +\infty}\dot{z}(t)=\pm\sqrt{2E}$.  From this we conclude that the trayectory is hyperbolic when $E>0$ and it is parabolic in the case $E=0$.

In the case that $E_{min}<E<0$ we have that the trayectory is contained in a closed curve, therefore it is a periodic orbit.

Finally if $E=E_{min}$ clearly we have that $z(t)\equiv 0$.
\begin{figure}[h]
\begin{center}
\includegraphics[scale=0.3]{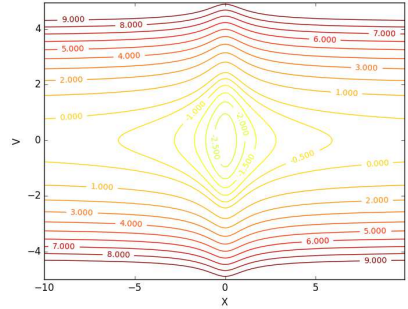}
\caption{Energy level for a rhomboidal configuration with masses $m_1=m_3=1$ and $m_2=m_4=0.5$.}\label{fig:energy}
\end{center}
\end{figure}
\end{proof}

\begin{theorem}\label{thm:prop.periodos}
We denote by $T_0(E)$ the minimal period for a solution of \eqref{eq:eq_new_red} with $E_{min}<E<0$. Then
\begin{enumerate}
 \item\label{it:T0.formula} for $z_E$ the only positive solution of $-\sum_{i=1}^n m_i (s_i^2+z_{E}^2)^{-\frac12}=E$
 \begin{equation}\label{eq:form.T0E-periodo}
 T_0(E)=2^{3/2}\int_{0}^{z_E} \left(E+\sum_{i=1}^n m_i (s_i^2+z^2)^{-\frac12}\right)^{-\frac12} dz,
 \end{equation}
 \item\label{it:T0.creciente} $T_0(E)$ is an increasing function.
 \item\label{it:T0.rango} $T_0\left((E_{min},0)\right)=(T_{min},+\infty)$, where  $T_{min}=2\pi\left(\sum_{i=1}^n\frac{m_i}{s_i^3} \right)^{-1/2}$.

 \end{enumerate}
\end{theorem}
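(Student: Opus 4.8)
The plan is to treat the three parts in sequence, deriving the period formula first by quadrature, then using that formula to establish monotonicity, and finally analyzing the two endpoint limits.

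For part \ref{it:T0.formula}, I would start from energy conservation \eqref{eq:conser.energ}. Writing $U(z):=\sum_{i=1}^n m_i(s_i^2+z^2)^{-1/2}$, the relation $E(z,\dot z)=E$ becomes $\tfrac12\dot z^2 - U(z) = E$, so $\dot z = \pm\sqrt{2}\,(E+U(z))^{1/2}$. On the half of the closed orbit where $z$ runs from $0$ up to the turning point $z_E$ (the unique positive solution of $-U(z_E)=E$, whose existence follows from the energy-curve analysis in the proof of Theorem \ref{thm:prin_ine}), the velocity is positive, and I would separate variables to get $dt = dz/\big(\sqrt2\,(E+U(z))^{1/2}\big)$. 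By the symmetry of the orbit with respect to both axes established in Theorem \ref{thm:prin_ine}, the full period is four times the time spent going from $z=0$ to $z=z_E$, which gives $T_0(E)=4\int_0^{z_E} dz/\big(\sqrt2(E+U(z))^{1/2}\big)=2^{3/2}\int_0^{z_E}(E+U(z))^{-1/2}\,dz$, matching \eqref{eq:form.T0E-periodo}. The integrand has an integrable square-root singularity at the upper endpoint, so the integral converges.

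For part \ref{it:T0.creciente}, I would show $T_0$ is increasing in $E$. The obstacle is that both the integrand and the upper limit $z_E$ depend on $E$: as $E$ increases, $z_E$ increases (since $-U$ is increasing in $|z|$), which tends to enlarge the integral, but the integrand $(E+U(z))^{-1/2}$ decreases pointwise in $E$, which tends to shrink it, so the sign of the derivative is not immediate. To handle this I would change variables to fix the domain of integration, substituting $z = z_E\, u$ with $u\in[0,1]$ (or, more robustly, a trigonometric-type substitution desingularizing the endpoint), turning $T_0(E)$ into an integral over a fixed interval with an $E$-dependent integrand; then I would differentiate under the integral sign and argue the integrand's derivative has a fixed sign. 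This monotonicity step is the main difficulty of the theorem, since it requires controlling the competition between the two effects above, and may need a convexity or rearrangement property of $U$.

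For part \ref{it:T0.rango}, I would compute the two limits. As $E\downarrow E_{min}=-\sum_i m_i/s_i=-U(0)$, the orbit shrinks to the equilibrium at the origin, and the period tends to the period of the linearized oscillator; linearizing \eqref{eq:eq_new_red} about $z=0$ gives $\ddot z \approx -\big(\sum_i m_i s_i^{-3}\big)z$, a harmonic oscillator of frequency $\omega=\big(\sum_i m_i s_i^{-3}\big)^{1/2}$, hence $T_0\to 2\pi/\omega = T_{min}$. As $E\uparrow 0$, the turning point $z_E\to\infty$ and the orbit becomes unbounded, so $T_0(E)\to+\infty$; I would verify divergence by bounding the integral below, using that $U(z)\to 0$ as $z\to\infty$ so the integrand behaves like $(E+U(z))^{-1/2}$ over an interval of length growing without bound. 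Combining these two endpoint values with the monotonicity and continuity from parts \ref{it:T0.formula}–\ref{it:T0.creciente} yields $T_0\big((E_{min},0)\big)=(T_{min},+\infty)$.
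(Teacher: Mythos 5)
Your part \ref{it:T0.formula} is correct and essentially identical to the paper's argument (quadrature from energy conservation plus the quarter-period symmetry, $z(T_0(E)/4)=z_E$). The genuine gap is in part \ref{it:T0.creciente}: you correctly identify the competition between the growing turning point $z_E$ and the pointwise-decreasing integrand, and you propose the right substitution $z=z_Eu$ to fix the domain, but you stop at ``differentiate under the integral sign and argue the integrand's derivative has a fixed sign,'' conceding it ``may need a convexity or rearrangement property of $U$.'' That sign verification is the heart of the theorem and is not carried out; after your substitution the integrand is $z_E\left(U(z_Eu)-U(z_E)\right)^{-1/2}$, and its monotonicity in $z_E$ is exactly what is in question. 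What closes it in the paper is an elementary algebraic identity, not convexity: writing $E=-\sum_i m_i(s_i^2+z_E^2)^{-1/2}$ in \eqref{eq:form.T0E-periodo} and applying termwise
\[
a^{-\frac12}-b^{-\frac12}=\frac{b-a}{\sqrt{ab}\left(\sqrt{a}+\sqrt{b}\right)},\qquad a=s_i^2+z^2,\quad b=s_i^2+z_E^2,
\]
the factor $b-a=z_E^2-z^2$ comes out of the entire sum, so that
\[
2^{-3/2}T_0(E)=\int_0^1\left(1-u^2\right)^{-\frac12}f(z_Eu,z_E)\,du,
\]
where
\[
f(z,z_E)=\left(\sum_{i=1}^n m_i\left\{(s_i^2+z^2)(s_i^2+z_E^2)\right\}^{-\frac12}\left\{(s_i^2+z^2)^{\frac12}+(s_i^2+z_E^2)^{\frac12}\right\}^{-1}\right)^{-\frac12}.
\]
Now the endpoint singularity is isolated in $(1-u^2)^{-1/2}$ and the kernel $f(z_Eu,z_E)$ is \emph{manifestly} increasing in $z_E$ for each fixed $u\in[0,1]$, since every term inside the sum decreases as $z_E$ grows; together with $z_E$ increasing in $E$, this gives part \ref{it:T0.creciente}. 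As written, your part 2 is a plan rather than a proof.

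A secondary weak point is the limit $E\downarrow E_{min}$ in part \ref{it:T0.rango}: asserting that the period of small oscillations converges to the linearized period $2\pi/\omega$ is itself a claim requiring proof (periods of nonlinear oscillators need not vary continuously into the linearization without an argument), and the paper justifies it precisely through the representation above: $f(z_Eu,z_E)\to\left(\sum_i m_i/(2s_i^3)\right)^{-1/2}$ pointwise as $z_E\to0$, and dominated convergence against the integrable factor $(1-u^2)^{-1/2}$ yields $T_0\to T_{min}$, while monotone convergence gives $T_0\to+\infty$ as $z_E\to+\infty$. Your divergence argument for $E\uparrow 0$ is fine and can be made one line: on $[0,z_E]$ one has $E+U(z)\leq U(0)=\sum_i m_i/s_i$, so the integrand is bounded below by a positive constant and $T_0\geq 2^{3/2}z_E\left(\sum_i m_i/s_i\right)^{-1/2}\to+\infty$. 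Finally, the continuity of $T_0$ that you invoke to conclude the range claim is not established in your parts 1--2; it follows immediately from the fixed-domain formula displayed above, which is one more reason the factorization step is indispensable.
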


\begin{proof}
Let $E_{min}<E<0$ and let $z(t)$ be the only solution with $z(0)=0$, $\dot{z}(0)>0$ and energy equal to $E$. Therefore $z(t)$ is $T_0(E)$-periodic. As a consequence of the symmetries of the equation we have that $z(T_0(E)/4)=z_E$. Then, taking account  \eqref{eq:conser.energ} we have that
\begin{equation*}
 \begin{split}
  \frac{T_0}{4}&=\frac{1}{\sqrt{2}}\int_0^{T_0/4}\left(E+\sum_{i=1}^n m_i (s_i^2+z^2)^{-\frac12}\right)^{-\frac12} \dot{z} dt\\
  &=\frac{1}{\sqrt{2}}\int_{0}^{z_E} \left(E+\sum_{i=1}^n m_i (s_i^2+z^2)^{-\frac12}\right)^{-\frac12} dz,
 \end{split}
\end{equation*}
and we have proved item \textit{1}. In order to prove item \textit{2} we note that
\begin{equation*}
 \begin{split}
  2^{-3/2}T_0(E)
  &=
    \int_0^{z_E}\left(\sum_{i=1}^n m_i \left((s_i^2+z^2)^{-\frac12}-(s_i^2+z_E^2)^{-\frac12}\right)\right)^{-\frac12}dz\\
  &=\int_0^{z_E} \left(z_E^2-z^2\right)^{-\frac12} f(z,z_E)dz\\
  &=\int_0^1 \left(1-u^2\right)^{-\frac12} f(z_Eu,z_E)du,
 \end{split}
 \end{equation*}
where \[f(z,z_E)=\left(\sum_{i=1}^n m_i \left\{(s_i^2+z^2)(s_i^2+z_E^2)\right\}^{-\frac12} \left\{(s_i^2+z^2)^{\frac12}+(s_i^2+z_E^2)^{\frac12} \right\}^{-1}\right)^{-\frac12}.\]
We note that $f(z_Eu,z_E)$ is a increasing function with respect to $z_E$ for $u\in [0,1]$ fix. This implies item \ref{it:T0.creciente}.

On the other hand
\begin{equation*}
 \lim\limits_{z_E \to 0}f(z_Eu,z_E)=\left(\sum_{i=1}^{n} \frac{m_i}{2s_i^3}\right)^{-\frac12} \quad \text{and}\quad  \lim\limits_{z_E \to +\infty}f(z_Eu,z_E)=+\infty.
\end{equation*}
Therefore, from the dominated convergence theorem and monotone convergence theorem we have that
\[\lim\limits_{E\to E_{min}}T_0=\lim\limits_{z_E\to 0}T_0=2\pi\left(\sum_{i=1}^{n} \frac{m_i}{s_i^3}\right)^{-\frac12}\quad \text{and}\quad \lim\limits_{E\to 0}T_0=\lim\limits_{z_E\to +\infty}T_0=+\infty.\]
Finally,  since $T_0=T_0(z_E)$ is continuous and increasing respect to $z_E$, we conclude the afirmation in the item \ref{it:T0.rango}.
\end{proof}

\begin{remark}
 It si posible to use the classical theory of hamiltonian systems (see \cite{A}) to derive the formula \eqref{eq:form.T0E-periodo} (see \cite{acinas2014estimates} for this approach in a related problem).
\end{remark}

\begin{remark}
Let us to show a second proof of item \ref{it:T0.rango} of Theorem \ref{thm:prop.periodos}.

The inequality $T_0>T_{min}$ is consequence of comparison Sturm's theorem applied to equations  $\ddot{z}+h(z)z=0$, where $h(z)=\sum_{i=1}^{n} m_i \left(s_i^2 +z^2\right)^{-3/2}$, and $\ddot{z}+\left(\sum_{i=1}^{n} m_i s_i^{-3}\right)z=0$. This prove that $T_0\left((E_{min},0)\right)\subset(T_{min},+\infty)$.

For the reverse inclusion  we  follow arguments of \cite{zhao2015nonplanar} and \cite{li2013characterization}, based on variational principles.

Let $T_0>T_{min}$. We consider the action integral
\[\mathcal{I}(z)=\int_0^{T_0}\frac12|\dot{z}|^2+\sum_{i=1}^n\frac{m_i}{\sqrt{s_i^2+z^2}}dt,\]
Then $T_0$-periodic solutions of \eqref{eq:eq_new_red} are critical points of $\mathcal{I}$ in the space $H^1(\mathbb{T},\rr)$, where $\mathbb{T}=\rr/T_0\mathbb{Z}$, of the functions  absolutely continuous, $T_0$-periodic with $\dot{z}\in L^2(\mathbb{T},\rr)$ (see \cite[Cor. 1.1]{Mawhin2010}). We prove the existence of critical points by means of the direct method of calculus of variations, i.e. we will prove that $\mathcal{I}$ has a minimum.  The functional $\mathcal{I}$ is not coercive in $H^1(\mathbb{T},\rr)$.  This deficiency is drawn with symmetry techniques (see \cite{David-2004}). The group $\mathbb{Z}_2$ acts on $H^1(\mathbb{T},\rr)$ according to the following assignments $(\bar{0}\cdot z)(t)=z(t)$ and $(\bar{1}\cdot z)(t)=-z(t+\frac{T_0}{2})$. The functional $\mathcal{I}$ is $\mathbb{Z}_2$-invariant, i.e. $\mathcal{I}(g\cdot z)=\mathcal{I}(z)$. We define the space of all $\mathbb{Z}_2$-symmetric (this simmetry is called the italian simmetry) funcions \[\Lambda(\mathbb{T},\
\mathbb{R}):
=\left\{ z\in H^1(\mathbb{T},\rr) | \forall g\in \mathbb{Z}_2 : z=g\cdot z \right\}.\]
The funciontal $\mathcal{I}$ restricted to $\Lambda$  is coercive. This fact follows from an obvious adaptation of Proposition 4.1 of \cite{David-2004}. We note that $F(z):=\sum_{i=1}^nm_i(s_i^2+z^2)^{-\frac{1}{2}}$ satisfies the condition $(A)$ in \cite[p. 12]{Mawhin2010}, then $\mathcal{I}$  is continuously differentiable and weakly lower semicontinuous on $H^1(\mathbb{T},\rr)$ (see \cite[p. 13]{Mawhin2010}). Therefore $\mathcal{I}$ has a minimum $z_0$ in $\Lambda(\mathbb{T},\mathbb{R})$. Then by the Palais' principle symmetric criticality,  $z_0$ is a critical point of $\mathcal{I}$ in $H^1(\mathbb{T},\rr)$ (see \cite{David-2004} and \cite{RichardPalais274}).

We use the second variation $\delta^2 \mathcal{I}$ in order to show  that $z_0\nequiv 0$. It is well known (see \cite[Th. 1.3.1]{jost1998calculus}) that if $z_0$ is a minimum of $\mathcal{I}$ on $H^1(\mathbb{T},\rr)$  then $\delta^2 \mathcal{I} (z_0,\varphi)\geq 0$ for all $\varphi\in H^1(\mathbb{T},\rr)$. In our case
\[\delta^2\mathcal{I}(0,\varphi)=\int_0^{T_0} |\dot{\varphi}|^2-\sum_{i=1}^{n}\frac{m_i}{s_i^3}\varphi^2 dt,\]
(see \cite[Eq. 1.3.6]{jost1998calculus}). In particular for $\varphi(t)=\sin (2\pi t/T_0)$ it follows from $T_0>T_{min}$  that
\begin{equation}\label{eq:form.delta2}
 \delta^2 \mathcal{I} (0,\varphi)=\left( \frac{4\pi^2}{T_0^2}-\sum_{i=1}^{n}\frac{m_i}{s_i^3} \right)\frac{T_0}{2}<0.
\end{equation}
It is sufficient  to guarantee that $z_0\equiv 0$ is not a minimum.

This second proof, unlike the first one, does not prove that $T_0$ is the minimum period for $z_0$. It could happen that $z_0$ had period $T_0/m$, with natural $m\in\mathbb{N}$. Because of Italian symmetry this $m$ should be odd.
\end{remark}

\begin{corollary}\label{cor:sol.periodica.sist.completo}
The complete $n+1$-body system has a infinity quantity of periodic solutions.
\end{corollary}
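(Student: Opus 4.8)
The plan is to combine the periodicity of the rigid motion of the primaries with the family of periodic particle motions produced by Theorem~\ref{thm:prop.periodos}. First I would recall the set-up of Section~\ref{sec:mas-mot}: the primaries, being in a rigid motion associated to a balanced collisionless central configuration, describe a periodic motion of minimal period $T=2\pi/\sqrt{\lambda}$, and by Theorem~\ref{thm:prim} the $z$-axis $L$ is invariant under the flow of \eqref{eq:newton}. Consequently, any solution $z(t)$ of the reduced equation \eqref{eq:eq_new_red} yields, together with the primaries, a bona fide solution of the complete $n+1$-body system \eqref{eq:newton} with the massless particle confined to $L$. A state of the whole system is thus a pair consisting of the configuration of primaries and of $(z,\dot z)$, whose two components evolve independently and periodically, with periods $T$ and $T_0(E)$ respectively, whenever $E_{min}<E<0$.

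The key observation is that such a product motion is periodic precisely when the two periods are commensurable, i.e. when $T_0(E)/T\in\mathbb{Q}$; in that case the whole system returns to its initial state after a common multiple of $T$ and $T_0(E)$, whereas an irrational ratio produces a quasi-periodic orbit dense on a two-torus. Hence I would reduce the statement to exhibiting infinitely many energies $E\in(E_{min},0)$ for which $T_0(E)$ is a rational multiple of $T$. This is immediate from item~\ref{it:T0.rango} of Theorem~\ref{thm:prop.periodos}: $T_0$ is a continuous, strictly increasing bijection from $(E_{min},0)$ onto $(T_{min},+\infty)$.

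To finish, for every integer $k$ with $kT>T_{min}$ there is, by the bijectivity just recalled, a unique energy $E_k\in(E_{min},0)$ with $T_0(E_k)=kT$. The corresponding solution of the complete system --- primaries in their rigid motion, particle oscillating on $L$ with energy $E_k$ --- is periodic of period $kT$, the particle completing one oscillation while the primaries complete $k$ revolutions. Since distinct values of $k$ give distinct energies $E_k$, and hence distinct amplitudes $z_{E_k}$ of the particle, these solutions are pairwise distinct, which produces the desired infinite family.

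I expect there to be no serious analytic obstacle here, the substance of the argument being carried entirely by Theorems~\ref{thm:prin_ine} and~\ref{thm:prop.periodos}. The only point requiring care is the elementary but essential remark that periodicity of the joint motion is equivalent to commensurability of the two periods, together with the verification that different choices of $k$ (equivalently, of the rational ratio $T_0/T$) really yield geometrically distinct periodic solutions rather than reparametrisations of a single one.
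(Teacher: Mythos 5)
Your proof is correct and follows essentially the same route as the paper: both arguments invoke item~3 of Theorem~\ref{thm:prop.periodos} to find energies $E$ with $T_0(E)$ commensurable with $T$, the only (inessential) difference being that the paper takes an arbitrary rational ratio $T_0(E)=Tl/m$ yielding an $lT$-periodic solution, while you restrict to integer multiples $T_0(E_k)=kT$. Your added remarks --- that commensurability is equivalent to periodicity of the joint motion, and that distinct $k$ yield genuinely distinct solutions via the strict monotonicity of $T_0$ --- are correct and in fact make explicit two small points the paper's one-line proof leaves implicit.
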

\begin{proof} We recall that $T$ denotes minimal period of primaries motion.
Let $l/m$  be a positive rational number with $Tl/m>T_{min}$. Then, there exists a solution of the entire system with period $lT$.
\end{proof}

\section{Synchronous solutions and pyramidal CC}\label{sec:sincro}

If the equation \eqref{eq:eq_new_red} has a $T$-periodic solution,  we say that the solution is \emph{synchronous}. In \cite{li2013characterization} was studied the problem of existence of synchronous solutions for $n$ equal mass primary bodies in a regular polygon configuration.

In this section we establish a relation between the existence of synchronous solutions and the concept of pyramidal central configuration (see \cite{fayccal1996classification,faycaltesis,ouyang2004pyramidal}).

\begin{definition}
A central configuration of $n+1$ mass point $q_0,\ldots,q_{n}$ in $\rr^{3}$  is called a pyramidal central configuration (PCC) if and only if $n$ points, we say $q_1,\ldots,q_n$, are in some plane $\Pi$ and $q_{0}\notin \Pi$.
\end{definition}

The following lemma was proved in \cite{ouyang2004pyramidal} (see also \cite{faycaltesis}).
\begin{lemma}[\cite{ouyang2004pyramidal}, Lemma 2.1]\label{lem:PCC}
 Let $q_0,\ldots,q_{n}$ be a PCC such that $m_{0}$ is off the plane containing $m_1,\ldots,m_n$. If $m_{0}>0$ then $m_{0}$ is equidistant from $m_1,\ldots,m_n$.
\end{lemma}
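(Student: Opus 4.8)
The plan is to write the central configuration equation for each of the $n+1$ bodies and then read off only the component orthogonal to the base plane $\Pi$. Concretely, choose the unit normal $\nu$ to $\Pi$ and set the origin of heights in $\Pi$, so that $q_i\cdot\nu=0$ for $i=1,\ldots,n$, while $h:=q_0\cdot\nu\neq 0$ since the apex lies off $\Pi$. Writing $M=\sum_{i=0}^n m_i$ and letting $c$ denote the center of mass, its normal coordinate is $c\cdot\nu=m_0 h/M$. The central configuration condition, written in its translation-invariant form, reads
\[
\sum_{k\neq j}\frac{m_k(q_k-q_j)}{r_{kj}^3}+\lambda\,(q_j-c)=0,\qquad j=0,\ldots,n,
\]
which is equivalent to Definition \ref{def:CC} once the center of mass is placed at the origin (and where, as usual for positive masses, $\lambda=U/\sum_j m_j|q_j-c|^2>0$).

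The key observation is that, for a base body $q_i$ with $i\geq 1$, every interaction with another base body $q_j$ ($j\geq 1$) contributes nothing to the $\nu$-component, because $q_j-q_i$ is parallel to $\Pi$. Hence, projecting the equation for $q_i$ onto $\nu$, only the term coming from the apex $q_0$ and the centripetal term survive:
\[
\frac{m_0\,(q_0-q_i)\cdot\nu}{r_{0i}^3}+\lambda\,(q_i-c)\cdot\nu=0,
\]
that is,
\[
\frac{m_0 h}{r_{0i}^3}-\lambda\,\frac{m_0 h}{M}=0.
\]
Since $m_0>0$ and $h\neq 0$, I may cancel the factor $m_0 h$ to obtain $r_{0i}^3=M/\lambda$ for every $i=1,\ldots,n$. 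As the right-hand side is independent of $i$, all the distances $r_{0i}=|q_0-q_i|$ coincide, which is precisely the assertion that $m_0$ is equidistant from $m_1,\ldots,m_n$.

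There is essentially no analytic obstacle here; the whole argument rests on the single algebraic fact that the in-plane forces have no component along $\nu$. The only point requiring care is the bookkeeping of the center of mass: because the apex is off $\Pi$, the point $c$ does not lie in $\Pi$, so the centripetal term $\lambda(q_i-c)$ has a nonzero $\nu$-component equal to $-\lambda m_0 h/M$, and it is exactly this term that balances the attraction of $q_0$ and forces the distances to be equal. Dropping it (for instance by wrongly assuming the base bodies sit at the center of mass) would yield the contradiction $m_0 h/r_{0i}^3=0$, so the location of $c$ must be tracked explicitly throughout.
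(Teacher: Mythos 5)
Your proof is correct. Note that the paper itself contains no proof of this lemma: it is imported verbatim from Lemma 2.1 of \cite{ouyang2004pyramidal}, so there is no internal argument to compare against, and your projection argument is in substance the standard one from that reference. The computation checks out: with the translation-invariant form of \eqref{eq:def.CC}, projecting the equation of a base body $q_i$ onto the normal $\nu$ annihilates every in-plane interaction, leaving $m_0h/r_{0i}^3-\lambda m_0h/M=0$ and hence $r_{0i}^3=M/\lambda$ independently of $i$. Two minor bookkeeping points: passing to the form $\sum_{k\neq j}m_k(q_k-q_j)r_{kj}^{-3}+\lambda(q_j-c)=0$ for the base bodies requires dividing by $m_j$, so you are implicitly using positivity of the base masses $m_1,\ldots,m_n$ (harmless here, since they are the primaries); and you do not actually need the a priori formula $\lambda=U/\sum_jm_j|q_j-c|^2$, because the projected equation itself excludes $\lambda=0$, which would force the contradiction $m_0h/r_{0i}^3=0$. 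Your closing remark about tracking the center of mass is exactly the crux: when $m_0=0$ the normal coordinate of $c$ vanishes and the projected equation becomes vacuous, which is consistent with the role this lemma plays in the paper, whose remark after Theorem \ref{thm:CC.3.4.satis.cond.adm} exhibits five-body PCCs with $m_0=0$ that violate the equidistance conclusion.
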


We remark that the condition $m_{0}>0$ is important in the previous Lemma. We will show below examples of two PCC with $m_{0}=0$ which do not satisfy the conclusion of  Lemma \ref{lem:PCC}.

\begin{proposition}\label{cor:sol.sincronica}
We assume that $q=q_1,\ldots,q_n$ is a balanced collisionless configuration and that the primaries are in a rigid motion. Then, there is a synchronous solution if and only if there exists $c\in \rr$ such that the points $(0,0,c),q_1,\ldots,q_{n}$ associated to the masses $0,m_1,\ldots,m_{n}$ form a PCC.
\end{proposition}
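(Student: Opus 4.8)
The plan is to reduce both sides of the claimed equivalence to a single inequality, namely $\lambda<\sum_{i=1}^n m_i/s_i^3$, where $\lambda$ is the multiplier of the primaries' central configuration (so that $T=2\pi/\sqrt{\lambda}$). First I would write down explicitly what it means for $(0,0,c),q_1,\ldots,q_n$ with masses $0,m_1,\ldots,m_n$ to be a PCC. Since the added mass is zero, the defining equations \eqref{eq:def.CC} for the indices $j=1,\ldots,n$ do not involve $q_0$ and reduce to the planar central configuration equations for $(q,m)$; these hold by hypothesis and force the multiplier of the PCC to coincide with that of the primaries, $\lambda=U/\sum_i m_i|q_i|^2>0$. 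Dividing the remaining ($j=0$) equation by $m_0$ before letting $m_0\to 0$, which is the standard reading of the central configuration condition for a massless body, the only surviving requirement is
\[
\sum_{i=1}^n\frac{m_i(q_i-q_0)}{|q_i-q_0|^3}=-\lambda q_0,\qquad q_0=(0,0,c).
\]

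Next, since each $q_i\in\Pi$ I would use $|q_i-q_0|^2=s_i^2+c^2$ and split this vector identity into its component in $\Pi$ and its component along $L$. The $\Pi$-component is $\sum_i m_i q_i(s_i^2+c^2)^{-3/2}=0$; grouping the bodies according to the value of $s_i$ and using that $(q,m)$ is \emph{balanced}, so that $\sum_{i\in F_r}m_iq_i=0$ for each $r$, exactly as in the proof of Theorem \ref{thm:prim}, this component vanishes automatically. Hence the only constraint comes from the $L$-component, $c\sum_i m_i(s_i^2+c^2)^{-3/2}=\lambda c$. A PCC requires $q_0\notin\Pi$, i.e. $c\neq 0$, so the condition becomes
\[
g(c):=\sum_{i=1}^n\frac{m_i}{(s_i^2+c^2)^{3/2}}=\lambda .
\]
I would then observe that $g$ is continuous and strictly decreasing on $(0,\infty)$, with $g(0^+)=\sum_i m_i/s_i^3$ and $g(+\infty)=0$; therefore the equation $g(c)=\lambda$ has a solution $c\neq 0$ if and only if $0<\lambda<\sum_i m_i/s_i^3$.

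Finally I would connect this to synchronous solutions via Theorem \ref{thm:prop.periodos}. A (non-equilibrium) synchronous solution is a non-constant $T$-periodic solution of \eqref{eq:eq_new_red}; its minimal period equals $T/k$ for some $k\in\nn$, and by item \ref{it:T0.rango} the minimal periods actually realized fill exactly the interval $(T_{min},+\infty)$ with $T_{min}=2\pi(\sum_i m_i/s_i^3)^{-1/2}$. Consequently such a solution exists if and only if $T>T_{min}$, which, since $T=2\pi/\sqrt{\lambda}$, is equivalent to $\lambda<\sum_i m_i/s_i^3$ — precisely the inequality obtained above for the PCC. Comparing the two characterizations then yields the stated equivalence. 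The main obstacle is conceptual rather than computational: one must read the central configuration equation for the massless body in its normalized $m_0\to0$ form, and recognize that the balanced hypothesis is exactly what annihilates the in-plane part of that equation; after this the argument is just monotonicity of $g$ together with a direct appeal to Theorem \ref{thm:prop.periodos}.
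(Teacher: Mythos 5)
Your proof is correct and follows essentially the same route as the paper's: both reduce the existence of a synchronous solution to the inequality $\lambda<\sum_{i=1}^n m_i/s_i^3$ via Theorem \ref{thm:prop.periodos}(\ref{it:T0.rango}) and $T=2\pi/\sqrt{\lambda}$, and both verify the PCC condition by solving $\sum_i m_i(s_i^2+c^2)^{-3/2}=\lambda$ for $c\neq 0$ while the balanced hypothesis annihilates the in-plane component (the paper's equations \eqref{eq:cond.CC1} and \eqref{eq:cond.CC2}). Your write-up is in fact slightly more careful than the paper's on two points it leaves implicit: the normalized $m_0\to 0$ reading of the central configuration equation for the massless vertex, and the observation that a nonconstant $T$-periodic solution may have minimal period $T/k$, which still yields $T>T_{min}$.
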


\begin{proof}
We start assuming that there exist a synchronous solution. As a consequence of the Theorem \ref{thm:prop.periodos}(\ref{it:T0.rango}) and the fact that $T^2=4\pi^2/\lambda$ we have that

 \begin{equation}\label{eq:lamdbda<suma.si3}
\lambda<\sum_{i=1}^n\frac{m_i}{s_i^3}.
 \end{equation}
 Since $\sum_{i=1}^{n}m_i\left(s_i^2+c^2\right)^{-3/2}\to 0$, when $c\to +\infty$, there exists $c\in \rr$ such that
 $ \sum_{i=1}^{n}m_i\left(s_i^2+c^2\right)^{-3/2}=\lambda$. Therefore
 \begin{equation}\label{eq:cond.CC1}
  -\sum_{i=1}^{n}\frac{m_i c}{\left(s_i^2+c^2\right)^{3/2}}=-\lambda c.
 \end{equation}
As $q_1,\ldots,q_n$ is a balanced configuration then
\begin{equation}\label{eq:cond.CC2}
   \sum_{i=1}^{n}\frac{m_i q_i}{\left(s_i^2+c^2\right)^{3/2}}= (0,0).
\end{equation}
The equations \eqref{eq:cond.CC1}, \eqref{eq:cond.CC2}  and the facts that $q_1,\ldots,q_n$ is a CC, with constant $\lambda$ complete the proof.
The proof of the reciprocal statement follows in a direct way.
\end{proof}

\begin{corollary}
We assume that $(q,m)$ is a balanced collisionless configuration and that the primaries are in a rigid motion. Then, there is a synchronous solution if and only if
 \begin{equation}\label{eq:ine_prin}
 \sum_{i<j}\frac{m_im_j}{r_{ij}}<\left(\sum_{i=1}^n\frac{m_i}{s_i^3}\right)\left(\sum_{i=1}^nm_is_i^2\right).
\end{equation}
\end{corollary}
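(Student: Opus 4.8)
The plan is to rewrite the synchronicity condition as the scalar inequality $\lambda<\sum_{i=1}^n m_i/s_i^3$ and then to substitute for $\lambda$ a closed expression in terms of $U$ and the moment of inertia of the configuration. First I would observe that a (nontrivial) synchronous solution is a $T$-periodic, nonequilibrium solution of \eqref{eq:eq_new_red}. By Theorem \ref{thm:prin_ine} such solutions exist only for energies $E_{min}<E<0$, and by Theorem \ref{thm:prop.periodos}(\ref{it:T0.rango}) their minimal periods range exactly over $(T_{min},+\infty)$. A nonequilibrium solution has period $T$ precisely when $T=kT_0(E)$ for some integer $k\ge 1$, so a $T$-periodic one exists if and only if $T>T_{min}$ (take $k=1$). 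Using $T=2\pi/\sqrt\lambda$ and $T_{min}=2\pi(\sum_i m_i/s_i^3)^{-1/2}$, this is exactly inequality \eqref{eq:lamdbda<suma.si3}, namely $\lambda<\sum_{i=1}^n m_i/s_i^3$. Alternatively one may simply invoke Proposition \ref{cor:sol.sincronica}, whose proof already records this equivalence.

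The heart of the argument is to express $\lambda$ through the configuration. Since the Newtonian potential \eqref{eq:potencial} is homogeneous of degree $-1$ in $(q_1,\dots,q_n)$, Euler's identity for homogeneous functions gives $\sum_{j=1}^n q_j\cdot\nabla_j U=-U$. On the other hand, dotting the central configuration relation \eqref{eq:def.CC} with $q_j$ and summing over $j$ yields $\sum_{j=1}^n q_j\cdot\nabla_j U=-\lambda\sum_{j=1}^n m_j|q_j|^2$. Comparing the two expressions and recalling $|q_j|=s_j$, I obtain
\[
\lambda=\frac{U}{\sum_{j=1}^n m_j s_j^2}=\frac{\sum_{i<j} m_im_j/r_{ij}}{\sum_{j=1}^n m_j s_j^2}.
\]
The denominator is strictly positive because the configuration is collisionless ($s_j>0$) and the masses are positive; in particular this also confirms $\lambda>0$, consistently with the rigid motion having period $T=2\pi/\sqrt\lambda$.

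It then remains to combine the two steps. Substituting the last display into $\lambda<\sum_{i=1}^n m_i/s_i^3$ and clearing the positive factor $\sum_{j=1}^n m_j s_j^2$ transforms the condition into
\[
\sum_{i<j}\frac{m_im_j}{r_{ij}}<\left(\sum_{i=1}^n\frac{m_i}{s_i^3}\right)\left(\sum_{j=1}^n m_j s_j^2\right),
\]
which is precisely \eqref{eq:ine_prin}. Since every implication above is an equivalence, this establishes the corollary.

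I expect no serious obstacle: the only nonmechanical ingredient is recognizing that the homogeneity of $U$ converts the vector central configuration equations \eqref{eq:def.CC} into the scalar formula for $\lambda$; the remainder is substitution and clearing a positive factor. The two points that merit care are the strictness of the inequality (inherited from the open range $(T_{min},+\infty)$ and from $\sum_j m_j s_j^2>0$) and the tacit convention that a synchronous solution is nonconstant, so that the equilibrium $z\equiv 0$ is excluded from the discussion.
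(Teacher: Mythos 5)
Your proposal is correct and follows essentially the same route as the paper: reduce synchronicity to $\lambda<\sum_{i=1}^n m_i/s_i^3$ via Proposition \ref{cor:sol.sincronica} (equivalently the period range in Theorem \ref{thm:prop.periodos}), then substitute $\lambda=U/\sum_j m_j s_j^2$. The only difference is that where the paper cites the identity $T^2=4\pi^2\sum_i m_i s_i^2/U$ from \cite[p.~109]{JaumeLlibre276}, you rederive it from Euler's homogeneity identity applied to \eqref{eq:def.CC}, which makes the argument self-contained but is not a different approach.
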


\begin{proof}
The result is consequence of \eqref{eq:lamdbda<suma.si3} and the fact that $T^2=4\pi^2 \sum_{i=1}^{n}m_is_i^2/U$   (see \cite[p. 109]{JaumeLlibre276}).
\end{proof}

\begin{remark}\label{com:sincronicas}
We observe that if $(q,m)$  is a  balanced  CC  with constant $\lambda>0$, which satisfies \eqref{eq:ine_prin} and if $r,\mu>0$  then  $(rq,\mu m)$ is a CC with constant $\lambda \mu r^3$, and \eqref{eq:ine_prin} remains unchanged for the new positions $rq$ and the new masses $\mu m$. Consequently we can assume that any  length and  mass take any  desired value. Then the equation \eqref{eq:eq_new_red} has a synchronous solutions if and only if the same equation, with $(rq,\mu m)$ instead $(q,m)$, has a synchronous solution.
\end{remark}

The sufficiency of the condition $n\leq 472$ in the following corollary  was proved in \cite{li2013characterization}.

\begin{corollary}\label{cor:nleq472}
We suppose that $(q,m)$ is the equal masses regular polygon configuration  (this is an balanced CC). Then there exists a synchronous solution if and only if $2\leq n\leq 472$.
\end{corollary}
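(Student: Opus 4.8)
The plan is to specialize the criterion \eqref{eq:ine_prin} to the regular polygon, reduce it to a clean inequality for a sum of cosecants, and then determine for which $n$ that inequality holds.

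\emph{Reduction.} By Remark \ref{com:sincronicas} I may rescale lengths and masses freely, so I take $m_i=1$ and $s_i=1$ for all $i$; thus all primaries lie on the unit circle. The right-hand side of \eqref{eq:ine_prin} is then $\left(\sum_i 1\right)\left(\sum_i 1\right)=n^2$. For the left-hand side, two vertices of the regular $n$-gon that are $k$ steps apart are at distance $r_{ij}=2\sin(\pi k/n)$; summing over the $n(n-1)$ ordered pairs and dividing by two gives
\[
\sum_{i<j}\frac{1}{r_{ij}}=\frac{n}{4}\sum_{k=1}^{n-1}\frac{1}{\sin(\pi k/n)}.
\]
Writing $S(n):=\sum_{k=1}^{n-1}\csc(\pi k/n)$, the criterion \eqref{eq:ine_prin} becomes $\tfrac{n}{4}S(n)<n^2$, that is,
\[
\frac{S(n)}{n}<4 .
\]
The whole problem is therefore to identify the integers $n$ with $S(n)/n<4$.

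\emph{Monotonicity.} The key structural fact I would establish is that $n\mapsto S(n)/n$ is \emph{strictly increasing}. Granting this, the set of admissible $n$ is an initial segment $\{2,\dots,N\}$, and it only remains to locate the threshold $N$. To control $S(n)$ I would apply the Euler--Maclaurin formula to $\sum_{k=1}^{n-1}\csc(\pi k/n)$, handling the endpoint singularities of $\csc(\pi x)$ at $x=0,1$ separately; this yields the expansion
\[
\frac{S(n)}{n}=\frac{2}{\pi}\left(\ln n+\gamma+\ln\frac{2}{\pi}\right)+O\!\left(\frac{1}{n}\right),
\]
with an explicit remainder bound. The leading term is increasing and unbounded, which both drives the strict monotonicity (after checking the few small values of $n$ directly, where the $O(1/n)$ term is harmless) and shows that $S(n)/n\to+\infty$, so the inequality must eventually fail.

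\emph{The threshold.} Solving $\tfrac{2}{\pi}\left(\ln n+\gamma+\ln\frac2\pi\right)=4$ gives $n\approx 472.3$, which already pinpoints the answer. To make this rigorous I would evaluate the two boundary cases directly, since $S(472)$ and $S(473)$ are finite sums of elementary quantities: a guaranteed (interval) computation shows $S(472)<4\cdot 472$ while $S(473)>4\cdot 473$. Combined with strict monotonicity, this proves that $S(n)/n<4$ holds precisely for $2\le n\le 472$, which is exactly the stated characterization of the existence of a synchronous solution.

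\emph{Main obstacle.} The delicate part is twofold. First, proving strict monotonicity of $S(n)/n$ for all $n$ is awkward because it compares cosecant sums with different denominators together with the singular end terms; the Euler--Maclaurin route with a controlled remainder seems the most robust way around this. Second, certifying the boundary inequalities is numerically sensitive, since the margins $4-S(472)/472$ and $S(473)/473-4$ are only of order $10^{-3}$, so both the asymptotic constant $\gamma+\ln(2/\pi)$ and the two finite sums must be evaluated to high, rigorously bounded precision.
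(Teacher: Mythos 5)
Your reduction is exactly the paper's: after normalizing $m_i=s_i=1$ via Remark \ref{com:sincronicas} and using $r_{ij}=2\sin(\pi k/n)$, the criterion \eqref{eq:ine_prin} becomes $\frac1n\sum_{k=1}^{n-1}\csc(\pi k/n)<4$, which is the paper's inequality \eqref{eq:ine_prin_LShShao}. From there the two arguments genuinely diverge. The paper never proves (or uses) monotonicity of $S(n)/n$: it bounds the sum \emph{below} using convexity of $\csc$ and the composite trapezoid rule, obtaining $S(n)/n\geq f(\pi/n)$ with $f$ explicit and decreasing, and computes $f(\pi/842)\approx 4.0006>4$ to kill all $n\geq 842$; the remaining range $n\leq 841$ is settled by direct computer verification (with the sufficiency for $2\leq n\leq 472$ already in \cite{li2013characterization}). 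Since the paper does not know a priori that the admissible set is an initial segment, the exhaustive check up to $841$ is what rules out holes. Your route — strict monotonicity of $S(n)/n$ plus certified evaluation of only the two boundary sums $S(472)$ and $S(473)$ — buys a structurally cleaner statement (the admissible set is an initial segment by design) and shrinks the computation from $840$ cases to two interval evaluations; the price is a substantially heavier analytic step that the paper deliberately avoids. Note that both proofs remain computer-assisted at the boundary, since the margins there are of order $10^{-3}$ or smaller, as you correctly anticipate.

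There is, however, one concrete soft spot in your sketch: strict monotonicity does \emph{not} follow from the expansion $S(n)/n=\frac{2}{\pi}\left(\ln n+\gamma+\ln\frac{2}{\pi}\right)+O(1/n)$ as stated, because the increment of the leading term, $\frac{2}{\pi}\ln\frac{n+1}{n}=\Theta(1/n)$, is of the \emph{same order} as the claimed remainder, so an unstructured $O(1/n)$ error can swamp it. You need a two-term Euler--Maclaurin expansion with an explicit remainder of order $o(1/n)$; this is available because the true correction is in fact $O(1/n^2)$ (numerically $\approx -\pi/(36n^2)$). The standard way to get it is to subtract the endpoint poles, writing $\csc(\pi x)-\frac{1}{\pi x}-\frac{1}{\pi(1-x)}$, sum the singular parts exactly via harmonic numbers and the digamma asymptotics, and apply Euler--Maclaurin with explicit remainder constants to the remaining smooth function on $[0,1]$. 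With that refinement (and direct checks for small $n$, where monotonicity is easily verified numerically, e.g. $S(2)/2=0.5$, $S(3)/3\approx 0.770$, $S(4)/4\approx 0.957$), your argument closes and gives a valid alternative proof of the corollary.
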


\begin{proof}
In this case $s_1=s_2=\cdots=s_n=:r$ and $m_1=m_2=\cdots=m_n=:M$. Then, from the law of cosines we obtain
\[
 \sum_{i<j}\frac{m_im_j}{r_{ij}}=\frac{nM^2}{4r}\sum_{j=1}^{n-1}\frac{1}{\sin\left(\frac{j\pi}{n}\right)}.
\]
Therefore the condition \eqref{eq:ine_prin} is equivalent to

\begin{equation}\label{eq:ine_prin_LShShao}
  \frac1n\sum_{j=1}^{n-1}\frac{1}{\sin\left(\frac{j\pi}{n}\right)}<4.
\end{equation}

This inequality was also derived by Li, J. et al. in \cite{li2013characterization}, where the authors proved (performing computer calculations) that inequality \eqref{eq:ine_prin_LShShao} holds true for $2\leq n\leq 472$. Let us prove that any other $n$ does not satisfies \eqref{eq:ine_prin_LShShao}.

Using that $1/\sin (x)$ is a convex function on $[0,\pi]$ and the composite trapezoid rule (see \cite{kincaid1991numerical}) we have that
\[
\begin{split}
 \int_{\frac{\pi}{n}}^{\frac{n-1}{n}\pi}\frac{1}{\sin (x)}dx&\leq \frac{\pi}{2n}\left\{ \frac{1}{\sin(\frac{\pi}{n})} + \frac{1}{\sin(\frac{n-1}{n}\pi)} +2\sum_{j=2}^{n-2}\frac{1}{\sin(j\frac{\pi}{n})} \right\}\\
 &=\frac{\pi}{n}\sum_{j=1}^{n-2}\frac{1}{\sin(j\frac{\pi}{n})}.
\end{split}
\]
Therefore
\[
\begin{split}
 \frac1n \sum_{j=1}^{n-1}\frac{1}{\sin\left(\frac{j\pi}{n}\right)}&\geq \frac{1}{\pi}\int_{\frac{\pi}{n}}^{\frac{\pi(n-1)}{n}}\frac{1}{\sin (x)}dx+\frac{1}{n\sin\left(\frac{n-1}{n}\pi\right)}\\
 &=\left.\frac{1}{2\pi}\log \left( \frac{1-\cos(x)}{1+\cos(x)}\right)\right|_{\frac{\pi}{n}}^{\frac{n-1}{n}\pi}+\frac{1}{n\sin\left(\frac{\pi}{n}\right)}\\
 &=\frac{1}{\pi}\left\{\log \left(\frac{1+\cos(\frac{\pi}{n})}{1-\cos(\frac{\pi}{n})}\right)+\frac{\pi/n}{\sin\left(\frac{\pi}{n}\right)}\right\}\\
 &=:f\left(\frac{\pi}{n} \right).
 \end{split}
\]
It is easy to see that $f(x)$ is a decreasing function on $(0,\pi/2)$. Moreover $f(\pi/842)\approx 4.0006>4$. Therefore, if $n\geq 842$ then $n$ does not satisfy inequality \eqref{eq:ine_prin_LShShao}. The validity of the inequality \eqref{eq:ine_prin_LShShao}, for $n\leq 841$, is easily checked using computer. This gives the result that the inequality holds only for $n \leq 472$.
\end{proof}

Our next objective is to verify that condition \eqref{eq:ine_prin} is satisfied for all balanced CC of 3-body or 4-body. Since we have already proved, in  Corollary \ref{cor:nleq472}, that \eqref{eq:ine_prin} holds for a equilateral triangle and square configurations of equal masses bodies, it only rest to prove, in virtue of Theorem \ref{thm:caracterizacion4}, the following result.

\begin{theorem}\label{thm:CC.3.4.satis.cond.adm}
The central configurations CCcl and CCr satisfy condition \eqref{eq:ine_prin}.
\end{theorem}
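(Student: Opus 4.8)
The plan is to reduce the inequality \eqref{eq:ine_prin} to a one‑line estimate on the central configuration multiplier $\lambda$. Recall that for a rigid motion $T^2=4\pi^2/\lambda$, while (as used in the preceding corollary) $T^2=4\pi^2\left(\sum_i m_is_i^2\right)/U$; hence $\lambda=U/\sum_i m_is_i^2$, and \eqref{eq:ine_prin} is exactly the statement $\lambda<\sum_{i=1}^n m_i/s_i^3$, i.e. \eqref{eq:lamdbda<suma.si3}. Moreover, by Theorem \ref{thm:caracterizacion4} both CCcl and CCr share the structure $q_1=-q_3$, $q_2=-q_4$, $m_1=m_3=:\alpha$, $m_2=m_4=:\beta$, with $s_1=s_3=:a$, $s_2=s_4=:b$ and $a\neq b$; thus $\sum_i m_i/s_i^3=2\alpha/a^3+2\beta/b^3$. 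It therefore suffices to produce an upper bound for $\lambda$ of this form, and the key device is to read $\lambda$ off the central configuration equation \eqref{eq:def.CC}, written for a single well‑chosen body as $\sum_{i\neq j}m_i(q_j-q_i)/r_{ij}^3=\lambda q_j$, bounding the surviving terms one by one. The body I would choose is always the one closer to the origin, so that its interaction with the antipodal pair at the larger radius contributes negatively to $\lambda$.

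For CCr the four points form a rhombus, so all four sides are equal, $r_{12}=r_{14}=r_{23}=r_{34}=\varrho$ with $\varrho=\sqrt{a^2+b^2}$. Writing \eqref{eq:def.CC} for body $1$ and using $q_3=-q_1$, $q_4=-q_2$, the component along $q_2$ has coefficient $\beta(r_{14}^{-3}-r_{12}^{-3})$, which vanishes precisely because $r_{12}=r_{14}$; the remaining scalar equation gives
\[
\lambda=\frac{2\beta}{\varrho^3}+\frac{\alpha}{4a^3}.
\]
Since $\varrho>b$ and $1/4<2$, each summand is strictly smaller than the matching term of $2\alpha/a^3+2\beta/b^3$, which yields \eqref{eq:lamdbda<suma.si3}. (One uses nothing beyond $a,b>0$; the normalizations $r_{13}<r_{24}$ and $m_1>m_2$ are not needed.)

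For CCcl all bodies lie on a line; after relabeling I may assume the inner pair has radius $b<a$ and mass $\beta$. Treating positions as scalars, \eqref{eq:def.CC} for that body reads
\[
-\frac{\alpha}{(a-b)^2}+\frac{\alpha}{(a+b)^2}+\frac{\beta}{4b^2}=\lambda b,
\]
the first two terms coming from the outer pair at $\pm a$ and the third from the antipode at $-b$. Dividing by $b$,
\[
\lambda=\frac{\alpha}{b}\left(\frac{1}{(a+b)^2}-\frac{1}{(a-b)^2}\right)+\frac{\beta}{4b^3}.
\]
The parenthesis is negative because $a+b>|a-b|$, so $\lambda<\beta/(4b^3)<2\beta/b^3\le\sum_i m_i/s_i^3$, proving \eqref{eq:lamdbda<suma.si3}.

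The main obstacle is the collinear case. Unlike the rhombus, the positions are parallel, so the vector equation \eqref{eq:def.CC} does not split into independent scalar conditions, and a direct expansion of \eqref{eq:ine_prin} drags in the cross terms $1/r_{12}+1/r_{14}$ together with the unknown algebraic relation among $a,b,\alpha,\beta$ forced by the central configuration, making a brute‑force comparison unpleasant. The resolution is exactly the choice of the inner body, together with the observation that its two interactions with the far antipodal pair (at distances $a-b$ and $a+b$) combine into a negative contribution to $\lambda$, so that $\lambda$ is controlled by the single self‑pair term $\beta/(4b^3)$. I expect no difficulty in the rhombus case, which is immediate once the equal‑side identity $r_{12}=r_{14}$ is invoked.
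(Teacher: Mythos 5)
Your proposal is correct, and it takes a genuinely different route from the paper. The paper normalizes coordinates via Remark \ref{com:sincronicas} and verifies \eqref{eq:ine_prin} by brute-force expansion: for CCr this is an easy term-by-term comparison, but for CCcl it leads to the reduction $x^3/(1-x^2)<1$, whose verification requires Moulton's uniqueness theorem, an implicit-differentiation argument on a quintic to show the position $x(\mu)$ is decreasing, and the limit computation $x(0^+)<3/4$. You instead exploit the equivalence of \eqref{eq:ine_prin} with $\lambda<\sum_i m_i/s_i^3$ (legitimate: $T^2=4\pi^2/\lambda$ and $T^2=4\pi^2\bigl(\sum_i m_is_i^2\bigr)/U$ give $\lambda=U/\sum_i m_is_i^2$, exactly as in the corollary preceding the theorem) and read $\lambda$ off the single-body equation \eqref{eq:def.CC}, which with the paper's sign convention indeed takes the form $\sum_{i\neq j}m_i(q_j-q_i)r_{ij}^{-3}=\lambda q_j$. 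I checked your two evaluations: for the rhombus the cross terms cancel by the equal-side identity, giving $\lambda=2\beta/\varrho^3+\alpha/(4a^3)<2\beta/b^3+2\alpha/a^3$ since $\varrho=\sqrt{a^2+b^2}>b$; for the collinear case, evaluating at an inner body makes the outer-pair contribution $\alpha\bigl((a+b)^{-2}-(a-b)^{-2}\bigr)/b$ negative, so $\lambda<\beta/(4b^3)<2\beta/b^3$. (The relabeling is harmless: Theorem \ref{thm:caracterizacion4} gives the antipodal structure, and collisionlessness forces $a\neq b$.) Your approach buys brevity and uniformity — it needs neither the Long--Sun normalization facts $r_{13}<r_{24}$, $m_1>m_2$, nor Moulton's theorem, nor any quintic analysis, and it even covers the equal-mass square case verbatim, which the paper handles separately through Corollary \ref{cor:nleq472}. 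The only thing the paper's computation yields beyond the theorem is the explicit monotone dependence $x(\mu)$, which is incidental; as a proof of the stated result, yours is strictly simpler.
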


\begin{proof}

Let's start by analyzing the central configuration CCr. From the Remark \ref{com:sincronicas}, we can suppose without loss of generality that $ q_1 = -q_3 = (0, y) $ for $ 0<y<1 $, $ q_2 = -q_4 = (1,0) $. The condition \eqref{eq:ine_prin} becomes
\[\frac{m_1^2}{2y}+\frac{4m_1m_2}{\sqrt{1+y^2}}+\frac{m_2^2}{2}<\left(\frac{2m_1}{y^3}+2m_2\right) \left(2m_1y^2+2m_2\right).\]
As $m_1^2/(2y)<4m_1^2/y$, $m_2^2/2<4m_2^2$ and $4m_1m_2/\sqrt{1+y^2}<4m_1m_2/y^3$ (since $y<1$), we have that the inequality holds.

 Now we consider the central configuration CCl. From Remark \ref{com:sincronicas} again, we can suppose  that $q_1=-q_3=1$, $q_2=-q_4=x$ with $0<x<1$, and $m_1=m_3=\mu$, $m_2=m_4=1-\mu$, with $0<\mu<1$.  Then the inequality \eqref{eq:ine_prin} becomes
\[\frac{2\mu(1-\mu)}{1-x} +\frac{2\mu(1-\mu)}{1+x}+\frac{\mu^2}{2}+\frac{(1-\mu)^2}{2x}<4\mu^2+4\mu(1-\mu)x^2+\frac{4\mu(1-\mu)}{x^3}+\frac{4(1-\mu)^2}{x}.\]
As $ \mu ^ 2/2 <4 \mu^2$ and $ (1-\mu)^2/(2x)< 4(1-\mu)^2/x $ it is sufficient to show that
\[\frac{2\mu(1-\mu)}{1-x} +\frac{2\mu(1-\mu)}{1+x}<\frac{4\mu(1-\mu)}{x^3},\]
and this is equivalent to see that
\begin{equation}\label{eq:ineq.4.cuerpos.alinea}
\frac{x^3}{1-x^2}<1.
\end{equation}
The values of $x$ involved in the above inequality are such that the configuration of positions $(-1,-x,x,1)$ and masses $(\mu,1-\mu,1-\mu,\mu)$ is central. It was shown in \cite{moulton1910straight} that given a mass $\mu$ there is only one value of $x$ satisfiying this condition (see also \cite{shoaib2011collinear}). So, we can define $x(\mu)$ as such value of $x$.  We note that $h(x)=x^3/(1-x^2)$ is an increasing function with respect to $x\in (0,1)$ and $h(x)< 1$ for $x\in (0,3/4)$. Hence, if we could prove that $x(\mu)$ is a decreasing function and
\begin{equation}\label{eq:ineq.4.cuerpos.lim0}
\lim\limits_{\mu\to 0}x(\mu)<3/4
\end{equation}
we would have justified \eqref {eq:ineq.4.cuerpos.alinea}.

Let's first prove that $x(\mu)$ is a decreasing function. Eliminating $\lambda$ from the equations \eqref{eq:def.CC} and replacing $q_j$ and $m_j$ by their expressions in $x$ and $\mu$ we get
\[\frac{\mu}{4} - \frac{\mu}{x \left(x + 1\right)^{2}} + \frac{\mu}{x \left(- x + 1\right)^{2}} + \frac{- \mu + 1}{\left(x + 1\right)^{2}} + \frac{- \mu + 1}{\left(- x + 1\right)^{2}} - \frac{1}{x^{3}} \left(- \frac{\mu}{4} + \frac{1}{4}\right) = 0\]
which is equivalent to
 \[\mu=- \frac{8 x^{5} - x^{4} + 8 x^{3} + 2 x^{2} - 1}{\left(x - 1\right) \left(x + 1\right) \left(x^{5} - 9 x^{3} + x^{2} - 1\right)}.\]
Therefore
\[
 \frac{d\mu}{dx}=\frac{x^{2} \left(16 x^{9} - 3 x^{8} + 32 x^{7} + 12 x^{6} - 304 x^{5} - 2 x^{4} + 44 x^{2} - 51\right)}{\left(x - 1\right)^{2} \left(x + 1\right)^{2} \left(x^{5} - 9 x^{3} + x^{2} - 1\right)^{2}}.
\]
Since $44x^2<51$ and $16 x^{9}  + 32 x^{7} + 12 x^{6} < 304 x^{5}$ for $x\in (0,1)$ then $d\mu/dx<0$ in the interval $ (0,1)$. Which, in turn, implies that $x$ is decreasing respect to $\mu$.

Let's see now that \eqref{eq:ineq.4.cuerpos.lim0} holds. When $\mu$ goes to $0$, $x(\mu)$ converges to the only solution  in the interval $(0,1)$ of  equation $8 x(0)^{5} - x(0)^{4} + 8 x(0)^{3} + 2 x(0)^{2} - 1=0$.  Then  $ 8 x(0)^{3} -1< 0$ which implies that $x(0)<3/4$ as we wanted to prove.
\end{proof}

\begin{remark}
As consequence of previous results there exist five-body $PCC's$ with $m_1,\ldots,m_4$  in a CCcl or CCr configuration and the mass $m_0=0$ is in the  line perpendicular to the  plane containing $m_1,\ldots,m_4$ and passing by the center of mass. These are examples of $PCC's$ wich does not verify the conclusion of Lemma \ref{lem:PCC}.
\end{remark}

\begin{corollary}
For all  balanced CC of 3-body or 4-body, the  problem $P$ has a synchronous solution.
\end{corollary}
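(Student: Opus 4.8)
The plan is to assemble the result from the classification of balanced configurations together with the analytic criterion for synchronous solutions already obtained. Recall that, by the corollary following Proposition~\ref{cor:sol.sincronica}, a synchronous solution exists if and only if inequality \eqref{eq:ine_prin} holds. Hence the entire task reduces to checking that every balanced collisionless central configuration with $n=3$ or $n=4$ satisfies \eqref{eq:ine_prin}. The advantage of this reformulation is that the balanced configurations have been completely catalogued in the preceding sections, so only finitely many geometric types remain to be inspected.

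First I would dispose of the highly symmetric cases. For $n=3$ the only balanced collisionless central configuration is the equilateral triangle with three equal masses, and for $n=4$ one of the three types produced by Theorem~\ref{thm:caracterizacion4} is the equal-mass square (type CCs). Both of these are instances of the equal-mass regular polygon configuration, with $n=3$ and $n=4$ respectively; since $2\le 3\le 472$ and $2\le 4\le 472$, Corollary~\ref{cor:nleq472} guarantees that \eqref{eq:ine_prin} holds in each case, and therefore a synchronous solution exists.

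It remains to treat the two non-polygonal four-body types, namely the collinear configuration CCcl and the rhombus CCr. These are exactly the configurations addressed in Theorem~\ref{thm:CC.3.4.satis.cond.adm}, which asserts that both satisfy \eqref{eq:ine_prin}; invoking that theorem closes the final two cases. Since Theorem~\ref{thm:caracterizacion4} shows that CCcl, CCr, and CCs exhaust all four-body balanced collisionless central configurations, while the equilateral triangle exhausts the three-body ones, inequality \eqref{eq:ine_prin} holds throughout, and the synchronous-solution criterion then yields the claim in every case.

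The only genuine difficulty lies inside Theorem~\ref{thm:CC.3.4.satis.cond.adm} (already established above), where the inequality for CCr follows from crude term-by-term bounds but the CCcl case requires controlling the Moulton relation $x(\mu)$: one must show that $x(\mu)$ is decreasing and that $\lim_{\mu\to 0}x(\mu)<3/4$, thereby reducing \eqref{eq:ine_prin} to the elementary estimate $x^3/(1-x^2)<1$. For the corollary itself, however, there is no remaining obstacle; it is purely a matter of combining the classification theorem with these two verification results.
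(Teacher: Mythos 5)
Your proposal is correct and matches the paper's own argument exactly: the paper likewise reduces the claim to verifying \eqref{eq:ine_prin} via the synchronous-solution criterion, dispatches the equilateral triangle and equal-mass square through Corollary \ref{cor:nleq472}, and handles the remaining types CCcl and CCr via Theorem \ref{thm:CC.3.4.satis.cond.adm}, with Theorem \ref{thm:caracterizacion4} guaranteeing the case analysis is exhaustive. Nothing essential differs, so no further comparison is needed.
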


\section{Non balanced central configurations}

The following result shows a necessary condition for that  a non-balanced CC allows a solution of the problem $P$.

\begin{theorem}\label{thm:no.admisible.movimiento}
We suppose that $(q,m)$ is a non balanced CC and  that the primaries are in a homographic motion, i.e.  equation \eqref{eq:x_j=rtQtq_j} is satisfied. Assume that the massless particle is moving on the $z$-axis with position vector $x_0(t)=(0,0,z(t))$. Then, some of the following statements are satisfied:
\begin{enumerate}
 \item\label{it:z==0} The massless particle is in a stationary motion and
 \begin{equation}\label{eq:acel.centrmasa=0}
  \sum_{i=1}^{n}\frac{m_iq_i}{s_i^3}=0,
 \end{equation}
 i.e. the positions $0,q_1,\ldots,q_n$ and the masses $0,m_1,\ldots,m_n$ are in a CC.
 \item\label{it:z=r} The $n+1$-body system is in a homothetic motion. i.e. $Q(\theta(t))$ in the equation \eqref{eq:x_j=rtQtq_j} is the identity matrix and $z(t)=cr(t)$, for some constant $c$. Moreover, the configuration $q_0,\ldots,q_n$ is a PCC, where $q_0=(0,0,c)$ and $m_0=0$.
\end{enumerate}
\end{theorem}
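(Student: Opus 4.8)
The plan is to substitute the primaries' positions into the Newtonian equation \eqref{eq:newton} for the massless particle and exploit that $x_0$ stays on $L$ while the $x_i$ stay in $\Pi$. Writing $x_j(t)=r(t)Q(\theta(t))q_j$ and $x_0(t)=(0,0,z(t))$, and using that $Q(\theta)$ is an isometry fixing the $z$-axis (so $|x_i-x_0|^2=r^2s_i^2+z^2$), the equation splits into a planar part and a vertical part. Multiplying the planar part by $r^2Q(\theta)^{-1}$ gives
\[\sum_{i=1}^n\frac{m_iq_i}{(r^2s_i^2+z^2)^{3/2}}=0\in\rr^2,\]
while the vertical part reproduces $\ddot z=-\sum_i m_iz(r^2s_i^2+z^2)^{-3/2}$. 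All the work is in exploiting the planar identity.

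Next I would group the bodies by distance to the origin exactly as in the proof of Theorem \ref{thm:prim}: with the distinct values of $\{s_1,\dots,s_n\}$ denoted $\sigma_1,\dots,\sigma_k$ and $a_j:=\sum_{i\in F_{\sigma_j}}m_iq_i$, the planar identity reads $\sum_j a_j(r^2\sigma_j^2+z^2)^{-3/2}=0$. Introducing $\zeta(t):=z(t)/r(t)$ (smooth wherever $r>0$ and the particle avoids the primaries, so the $\sigma_j^2$ are distinct positive numbers) and factoring out $r^{-3}$ turns this into
\[\sum_{j=1}^k\frac{a_j}{(\sigma_j^2+\zeta(t)^2)^{3/2}}=0\qquad\text{for all }t.\]
Since $(q,m)$ is not balanced, at least one $a_j\neq0$.

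The crux is the claim that $\zeta$ is constant. If it were not, $\zeta$ would be strictly monotone and of one sign on some subinterval, so $s:=\zeta^2$ would sweep out an interval $J$; then $\sum_j a_j(\sigma_j^2+s)^{-3/2}=0$ for every $s\in J$, and reading this componentwise in $\rr^2$ together with the linear independence in Lemma \ref{lem:1} (the functions $s\mapsto(\sigma_j^2+s)^{-3/2}$ are precisely the $y_{\sigma_j^2}$) forces every $a_j=0$, contradicting non-balancedness. Hence $\zeta\equiv c$ for a constant $c$, i.e. $z(t)=c\,r(t)$. This linear-independence step, and the care needed to pass from ``the relation holds along the curve $\zeta(t)$'' to ``it holds on an interval of values of $s$,'' is the main obstacle; the rest is bookkeeping.

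Finally I split on $c$. If $c=0$ then $z\equiv0$ (stationary), and the planar identity at $\zeta=0$ gives $\sum_i m_iq_i/s_i^3=0$, which is statement \ref{it:z==0} and \eqref{eq:acel.centrmasa=0}. If $c\neq0$, substituting $z=cr$ into the vertical equation yields $\ddot r=-\mu/r^2$ with $\mu:=\sum_i m_i(s_i^2+c^2)^{-3/2}>0$; comparing with the Kepler relations \eqref{eq:kepler.2.dim}, $\ddot r-r\dot\theta^2=-\lambda/r^2$ and $r^2\dot\theta=\ell$ constant, gives $\ell^2=(\lambda-\mu)r$. If $\lambda\neq\mu$ this forces $r$ constant, whence $z=cr$ is a nonzero constant and $\ddot z=0$ contradicts the vertical equation; therefore $\lambda=\mu$, so $\ell=0$, $\dot\theta\equiv0$, the motion is homothetic (with $Q\equiv I$ after the normalization $\theta(0)=0$) and $z=cr$. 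The identities $\lambda=\sum_i m_i(s_i^2+c^2)^{-3/2}$ and $\sum_i m_iq_i(s_i^2+c^2)^{-3/2}=0$ are exactly \eqref{eq:cond.CC1} and \eqref{eq:cond.CC2}, so $(0,0,c),q_1,\dots,q_n$ with masses $0,m_1,\dots,m_n$ form a PCC, giving statement \ref{it:z=r}.
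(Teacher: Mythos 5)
Your proposal is correct and follows essentially the same route as the paper: reduce the planar component of \eqref{eq:newton} to $\sum_j a_j(\sigma_j^2+\zeta^2)^{-3/2}=0$, use Lemma \ref{lem:1} to force $\zeta=z/r$ constant (you usefully spell out the passage from ``holds along $\zeta(t)$'' to ``holds on an interval of $s$,'' which the paper glosses), and then split on $c=0$ versus $c\neq 0$, recovering \eqref{eq:cond.CC1} and \eqref{eq:cond.CC2} exactly as in Proposition \ref{cor:sol.sincronica}. The only cosmetic difference is in Case 2: the paper splits on the angular momentum $d$ being zero or not, while you split on $\lambda=\mu$ or not; these are equivalent via your identity $\ell^2=(\lambda-\mu)r$, and both reach the same contradiction through $r$ being constant.
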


\begin{proof}
 We recall the definition of the function $f$ and line $L$ from the Section \ref{sec:admisible.configuraciones}.

The fact that the massless particle is moving on $L$, is equivalent to the condition $f(t,x_0(t))\in L$ for all $t$, which, instead, is equivalent to the equality
\begin{equation}\label{eq:z/r}
 \sum_{i=1}^n\frac{m_ir(t)Q(\theta (t))q_i}{\left(r(t)^2|q_i|^2+z(t)^2\right)^{3/2}}=0,
\end{equation}
for every $t\in \rr$.

With the same notation and following similar reasoning that in the proof of Theorem \ref{thm:prim}, we prove that
\begin{equation}\label{eq:sumr.sumFr_miqi=0}
\sum_{j=1}^k\left\{\frac{1}{(s_j^{2}+(z(t)/r(t))^2)^{3/2}}\sum_{i\in F_j}m_iq_i\right\}=0.
\end{equation}
If $z(t)/r(t)$  would be a non constant function then previous equation and Lemma \ref{lem:1} would imply that $q$ is balanced, which is a contradiction. Hence there exist $c\in \rr$ such that $z(t)=cr(t)$. Now, we have two cases.

\emph{Case 1:} $c=0$. Then $z\equiv 0$ and \eqref{eq:acel.centrmasa=0} follows from \eqref{eq:z/r}.

\emph{Case 2:} $c\neq 0$. From equation \eqref{eq:eq_new_red}, the Kepler equations \eqref{eq:kepler.2.dim}, and the fact that $z(t)=cr(t)$ we have that
\begin{equation}\label{eq:kepler=CC}
 -\frac{1}{r(t)^2}\sum_{i=1}^{n}\frac{m_i}{(s_i^2+c^2)^{3/2}}=-\frac{\lambda}{r(t)^2}+r(t)\dot{\theta}(t)^2.
\end{equation}
The second equality in \eqref{eq:kepler.2.dim} implies the Kepler's second law, i.e. there exists $d\in\rr$ such that $r^2\dot{\theta}\equiv d$. Replacing $\dot{\theta}$ in equation \eqref{eq:kepler=CC} and multiplying by $r(t)^3$ we obtain
\begin{equation}\label{eq:r.sum-lambd=d}
-r(t)\left(\sum_{i=1}^{n}\frac{m_i}{(s_i^2+c^2)^{3/2}}-\lambda\right)=d^2.
\end{equation}
Therefore, if $d\neq 0$ then $\dot{r}(t)\equiv 0$, and this implies $\dot{z}(t)\equiv 0$. As $z(t)$ is a constant function and it solves equation \eqref{eq:eq_new_red}, then $z(t)\equiv 0$. Hence we are in the case 1 again. Consequently we suppose $d=0$. Therefore $\theta(t)\equiv cte$ and the motion is homothetic. From \eqref{eq:sumr.sumFr_miqi=0} and \eqref{eq:r.sum-lambd=d} we deduce that in this new situation equation \eqref{eq:cond.CC1} and \eqref{eq:cond.CC2} hold. This, as in the proof of Proposition \ref{cor:sol.sincronica}, implies the desired result.
\end{proof}

\begin{example}
 We present an example of a $3+1$-body system satisfiying the situation described in the item \ref{it:z==0} of the Theorem \ref{thm:no.admisible.movimiento}, i.e. $(q,m)$ is a non balanced CC and $z(t)\equiv 0$. For this, it is sufficient to find a 4-body CC with a zero mass body located in the center of mass.

 We start with a Euler's collinear central configuration formed by three primary bodies of masses $m_1 = 4-\mu$, $m_2 = 2 + \mu$ and $m_3 = 1$, where $0<\mu<1$, and positions, respect to a convenient $1$-dimensional coordinate system,  given by $q_1 = 0$, $q_2 = 1$ and $q_3 = 1 + r$. It is know (see \cite{Moeckel:2014}) that $r$ is the only positive solution of
\[
p(r,\mu):=6 r^{5} +\left(16- \mu \right) r^{4}  +  \left( 14- 2 \mu \right) r^{3}- \left( \mu + 5\right)  r^{2}-\left( 2 \mu + 7\right) r - \mu - 3=0.
\]
Since  $p(0,\mu)=-\mu-3$ and $p(1,\mu)=-7\mu+21$ then $r=r(\mu)\in (0,1)$, for all $0<\mu<1$.
In this case the center of mass $C=C(\mu)$ is equal to $(\mu+r+3)/7$, so $C\in (0,1)$.

We consider a massless particle with coordinate $x$. The acceleration resulting from the action of the gravitational field is equal to
\[
f(x)= - \frac{4-\mu }{x^{2}}+\frac{\mu + 2}{\left(- x + 1\right)^{2}} + \frac{1}{\left(r - x + 1\right)^{2}}.
\]
Note that the right hand side of the previous equation is an increasing function that tends to $-\infty$ when $x$ goes to 0, and tends to $+\infty$ when $x$ goes to 1, so there is a unique point $\bar{x}=\bar{x}(\mu)\in (0,1)$ such that the equality $f(\bar{x})=0$ holds. This point is an equilibrium for the gravitational field generated for the primaries.

Let's see that there exists $ \mu \in (0,1) $ such that $ C(\mu) = \bar{x} $, i.e. $f(C)=0$. For this purpose, since $C$ is a continuous function with respect to $\mu$,  we show that $f$ changes of sign on  $(0,1)$.  The function $f(x)$ can be written as $$f(x)=\frac{Nf(x)}{Df(x)},$$ where $Df(x)=x^{2} \left(x - 1\right)^{2} \left(r - x + 1\right)^{2}$. Note that  $Df(x)>0$ for all $x\in (0,1)$. If we consider $\mu=0$ and compute $Nf(C)$ we have that
\[Nf(C)=\frac{r^{4}}{2401} + \frac{1514 r^{3}}{2401} + \frac{2245 r^{2}}{2401} + \frac{1110 r}{2401} + \frac{333}{2401}>0,\]
on the other, hand if  $\mu=1$ then
\[Nf(C)=- \frac{71 r^{4}}{2401} + \frac{1486 r^{3}}{2401} + \frac{401 r^{2}}{2401} - \frac{1480 r}{2401} - \frac{592}{2401}<0,\]
because $0<r<1$.
\end{example}

\begin{remark}
 The following question is posed. Is there some non balanced central configuration $(q,m)$ such that the $n+1$-body system perform the motion described in Theorem \ref{thm:no.admisible.movimiento}(2)?
\end{remark}

\def\cprime{$'$}


\begin{thebibliography}{10}

\bibitem{acinas2014estimates}
S.~Acinas, G.~Giubergia, F.~Mazzone, and E.~Schwindt.
\newblock On estimates for the period of solutions of equations involving the
  $\varphi$-laplace operator.
\newblock {\em Journal of Abstract Differential Equations and Applications},
  5(1):21--34, 2014.

\bibitem{A}
V.~I. Arnol{\cprime}d.
\newblock {\em Mathematical methods of classical mechanics}, volume~60 of {\em
  Graduate Texts in Mathematics}.
\newblock Springer-Verlag, New York, second edition, 1989.

\bibitem{VladimirI.Arnold229}
V.~I. Arnold, V.V. Kozlov, and A.~I. Neishtadt.
\newblock {\em Mathematical Aspects of Classical and Celestial Mechanics}.
\newblock Springer Science \& Business Media, jul 2007.

\bibitem{bakker2015separating}
L.~Bakker and S.~Simmons.
\newblock A separating surface for {S}itnikov-like $n+ 1$-body problems.
\newblock {\em Journal of Differential Equations}, 258(9):3063--3087, 2015.

\bibitem{bountis2009stability}
T~Bountis and KE~Papadakis.
\newblock The stability of vertical motion in the n-body circular {S}itnikov
  problem.
\newblock {\em Celestial Mechanics and Dynamical Astronomy}, 104(1):205--225,
  2009.

\bibitem{corbera2000periodic}
M.~Corbera and J.~Llibre.
\newblock Periodic orbits of the {S}itnikov problem via a poincar{\'e} map.
\newblock {\em Celestial Mechanics and Dynamical Astronomy}, 77(4):273--303,
  2000.

\bibitem{corbera2002symmetric}
M.~Corbera and J.~Llibre.
\newblock On symmetric periodic orbits of the elliptic {S}itnikov problem via
  the analytic continuation method.
\newblock {\em Contemporary Mathematics}, 292:91--128, 2002.

\bibitem{fayccal1996classification}
N.~Fay{\c{c}}al.
\newblock On the classification of pyramidal central configurations.
\newblock {\em Proceedings of the American Mathematical Society},
  124(1):249--258, 1996.

\bibitem{faycaltesis}
N.~Fay\c{c}al.
\newblock {\em On the classification of pyramidal central configurations}.
\newblock PhD thesis, School of Mathematics and Mathematics and Statistics,
  Carleton University, Ottawa, Canada., 1995.

\bibitem{hampton2005co}
M.~Hampton.
\newblock Co-circular central configurations in the four-body problem.
\newblock In {\em EQUADIFF 2003}, pages 993--998. World Scientific, 2005.

\bibitem{jost1998calculus}
J.~Jost and X.~Li-Jost.
\newblock {\em Calculus of Variations}.
\newblock Cambridge Studies in Advanced Mathematics. Cambridge University
  Press, 1998.

\bibitem{kincaid1991numerical}
D.~Kincaid and W.~Cheney.
\newblock {\em Numerical Analysis: Mathematics of Scientific Computing}.
\newblock Brooks-Cole, 1991.

\bibitem{li2013characterization}
F.~Li, S.~Zhang, and X.~Zhao.
\newblock The characterization of the variational minimizers for spatial
  restricted ${N}+1$-body problems.
\newblock {\em Abstract and Applied Analysis}, 2013(Article ID 845795), 2013.

\bibitem{JaumeLlibre276}
J.~Llibre, R.~Moeckel, and C.~Sim\'o.
\newblock {\em Central Configurations, Periodic Orbits, and Hamiltonian
  Systems}.
\newblock Advanced Courses in Mathematics - CRM Barcelona. Birkh{\"{a}}user,
  2015, nov 2015.

\bibitem{llibre2008families}
J.~Llibre and R.~Ortega.
\newblock On the families of periodic orbits of the {S}itnikov problem.
\newblock {\em SIAM Journal on Applied Dynamical Systems}, 7(2):561--576, 2008.

\bibitem{long2002four}
Y.~Long and S.~Sun.
\newblock Four-body central configurations with some equal masses.
\newblock {\em Archive for rational mechanics and analysis}, 162(1):25--44,
  2002.

\bibitem{marchesin2013spatial}
M.~Marchesin and C.~Vidal.
\newblock Spatial restricted rhomboidal five-body problem and horizontal
  stability of its periodic solutions.
\newblock {\em Celestial Mechanics and Dynamical Astronomy}, 115(3):261--279,
  2013.

\bibitem{Mawhin2010}
J.~Mawhin and M.~Willem.
\newblock {\em Critical Point Theory and Hamiltonian Systems}.
\newblock Applied Mathematical Sciences. Springer, 1989.

\bibitem{moeckel1990central}
R.~Moeckel.
\newblock On central configurations.
\newblock {\em Mathematische Zeitschrift}, 205(1):499--517, 1990.

\bibitem{Moeckel:2014}
R.~Moeckel.
\newblock {C}entral configurations.
\newblock {\em Scholarpedia}, 9(4):10667, 2014.
\newblock revision \#142886.
\newblock \href {http://dx.doi.org/10.4249/scholarpedia.10667}
  {\path{doi:10.4249/scholarpedia.10667}}.

\bibitem{moser2016stable}
J.~Moser.
\newblock {\em Stable and Random Motions in Dynamical Systems: With Special
  Emphasis on Celestial Mechanics}.
\newblock Annals Mathematics Studies. Princeton University Press, 1973.

\bibitem{moulton1910straight}
F.~R. Moulton.
\newblock The straight line solutions of the problem of n bodies.
\newblock {\em The Annals of Mathematics}, 12(1):1--17, 1910.

\bibitem{ouyang2004pyramidal}
T.~Ouyang, Z.~Xie, and S.~Zhang.
\newblock Pyramidal central configurations and perverse solutions.
\newblock {\em Electronic Journal of Differential Equations (EJDE)[electronic
  only]}, 2004:Paper--No, 2004.

\bibitem{RichardPalais274}
R.~S. Palais.
\newblock The principle of symmetric criticality.
\newblock {\em Communications in Mathematical Physics}, 69(1):19--30, 1979.

\bibitem{pandey2013periodic}
L.P. Pandey and I.~Ahmad.
\newblock Periodic orbits and bifurcations in the {S}itnikov four-body problem
  when all primaries are oblate.
\newblock {\em Astrophysics and Space Science}, 345(1):73--83, 2013.

\bibitem{perez2007convex}
E.~Perez-Chavela and M.~Santoprete.
\newblock Convex four-body central configurations with some equal masses.
\newblock {\em Archive for rational mechanics and analysis}, 185(3):481--494,
  2007.

\bibitem{pustyl1990certain}
L.D. Pustyl'nikov.
\newblock On certain final motions in the $n$-body problem.
\newblock {\em Journal of Applied Mathematics and Mechanics}, 54(2):272--274,
  1990.

\bibitem{rivera2013periodic}
A.~Rivera.
\newblock Periodic solutions in the generalized {S}itnikov $(n+1)$-body
  problem.
\newblock {\em SIAM Journal on Applied Dynamical Systems}, 12(3):1515--1540,
  2013.

\bibitem{shoaib2011collinear}
Muhammad Shoaib and Ibrahima Faye.
\newblock Collinear equilibrium solutions of four-body problem.
\newblock {\em Journal of Astrophysics and Astronomy}, 32(3):411--423, 2011.

\bibitem{sitnikov1960existence}
K.~{S}itnikov.
\newblock The existence of oscillatory motions in the three-body problem.
\newblock In {\em Dokl. Akad. Nauk SSSR}, volume 133, pages 303--306, 1960.

\bibitem{soulis2008periodic}
P.S. Soulis, K.E. Papadakis, and T.~Bountis.
\newblock Periodic orbits and bifurcations in the {S}itnikov four-body problem.
\newblock {\em Celestial Mechanics and Dynamical Astronomy}, 100(4):251--266,
  2008.

\bibitem{David-2004}
D.~L.~Ferrario; S.Terracini.
\newblock On the existence of collisionless equivariant minimizers for the
  classical $n$-body problem.
\newblock {\em Inventiones mathematicae}, 155, 02 2004.

\bibitem{zhao2015nonplanar}
X.~Zhao and S.~Zhang.
\newblock Nonplanar periodic solutions for spatial restricted 3-body and 4-body
  problems.
\newblock {\em Boundary Value Problems}, 2015(1):1, 2015.

\end{thebibliography}
\end{document}